\documentclass[journal]{IEEEtran}
\usepackage{amsmath,amssymb,amsthm,amsfonts}
\usepackage{graphicx}
\usepackage{bm}
\usepackage{cite}
\usepackage{url}
\usepackage{enumitem}

\newtheorem{theorem}{Theorem}
\newtheorem{lemma}{Lemma}
\newtheorem{proposition}{Proposition}
\newtheorem{corollary}{Corollary}
\newtheorem{definition}{Definition}
\newtheorem{assumption}{Assumption}
\newtheorem{remark}{Remark}

\newcommand{\R}{\mathbb{R}}
\newcommand{\Z}{\mathbb{Z}}
\newcommand{\E}{\mathbb{E}}
\newcommand{\Prob}{\mathbb{P}}
\newcommand{\lat}[1]{\Lambda_{#1}}
\newcommand{\vor}[1]{\mathcal{V}_{#1}}
\newcommand{\md}[1]{\ \mathrm{mod}\ #1}
\newcommand{\rcov}[1]{r_{\mathrm{cov}}(#1)}
\newcommand{\rpack}[1]{r_{\mathrm{pack}}(#1)}

\begin{document}

\title{Progressive-Resolution Secure Aggregation for Federated Learning}

\author{Seyed Mohammad Azimi-Abarghouyi,~\IEEEmembership{Member,~IEEE}
\thanks{The author is with the Department of Electrical Engineering,
Chalmers University of Technology, Gothenburg, Sweden
(Email: azimimo@chalmers.se).}}

\maketitle

\begin{abstract}
Secure aggregation lets a server recover an aggregate of client updates
without observing any individual update, but conventional protocols fix the
aggregate precision when clients upload. We introduce and formulate a new progressive-resolution secure-aggregation functionality in which clients upload once and successively finer resolutions of the same aggregate can later be authorized without renewed client participation. To realize this functionality, we propose progressive-resolution secure aggregation (PSA):
each clipped, dithered update is represented by compatible nested-lattice
digits; separately releasable layers are protected by secure aggregation and
an additional aggregate pad that remains unavailable to the server until a
non-colluding release controller authorizes that layer. Separate modular sums
create an overflow risk, for which we derive worst-case and concentration-based
margin rules. Under independent zero-mean layer symbols and normalized
aggregate-noise scale $O(\sqrt K)$, the sufficient margin for $K$ clients and
radix $\rho$ is $\tfrac12\log_\rho K+O(1)$ radix steps. Under an i.i.d.\
uniform-digit prior, a matching converse shows that the same order is necessary
within the class of interfaces that expose each separately authorized release
through one modular sum; equivalently, each additional release carries an
arithmetic-payload premium of order $\tfrac12\log_2 K$ bits per model
dimension relative to a single-stage interface. We also give release-by-release
differential-privacy accounting and resolution-dependent distortion bounds.
Exact finite-alphabet calculations exhibit margin scaling consistent with the
analysis. Federated experiments on MNIST and CIFAR-10 show that PSA coincides
with a flat secure sum when progressive release is not used and quantify the
accuracy--payload trade-off against distributed discrete-Gaussian and Skellam
mechanisms at matched privacy.
\end{abstract}
\vspace{-10pt}
\begin{IEEEkeywords}
Federated learning, secure aggregation, differential privacy,
lattice codes, nested lattices, quantization, private summation.
\end{IEEEkeywords}
\vspace{-10pt}
\section{Introduction}
\label{sec:intro}

\IEEEPARstart{F}{ederated} learning (FL) keeps raw data at clients and exposes
model updates through aggregation \cite{kairouz_survey}. Secure aggregation
(SecAgg) cryptographically restricts the server to a prescribed aggregate,
while distributed differential privacy (DP) can limit what the released
aggregate reveals by adding client-side noise before aggregation
\cite{bonawitz,bell,ddgauss,skellam}.

These mechanisms control the content of an aggregate but not the precision at
which it is disclosed. Once a fine-precision aggregate has been decoded,
rounding it afterward cannot undo the finer disclosure. This matters when the
precision required by an approved analysis is not known when clients upload,
particularly in cross-device settings where participants may be unavailable
for a second upload. Committing every query to the finest potentially useful
precision can therefore disclose more than the eventual purpose requires.
For example, a coarse hospital statistic may already determine on which side
of a decision threshold the population lies, making a finer value unnecessary.
Disclosure-control guidance similarly cautions against publishing more
precision than a purpose requires \cite{ons_disclosure}, the data-minimization
principle of the EU General Data Protection Regulation limits processing to
what is necessary for the stated purpose \cite{gdpr}, and tiered data-access
agreements operationalize the same idea by granting an analyst the coarsest
view sufficient for an approved analysis. The resulting design
goal is a minimum-necessary-resolution policy: release a coarse aggregate
first and refine it only when justified, without recalling the clients.

Existing approaches do not provide this combination. Lattice quantizers for FL
provide fixed-resolution compression and may incorporate privacy noise
\cite{uveqfed,jopeq,cpa,olala}; threshold or split-trust systems can delay
aggregate decryption \cite{paillier,prio}; and distributed DP under SecAgg
controls the information released by an aggregate \cite{ddgauss,skellam}.
None of these mechanisms by itself gives mutually consistent coarse and fine
parts that can be authorized separately after one client upload.

To realize this interface, we propose \emph{progressive-resolution secure aggregation} (PSA). PSA combines two logically distinct components. First, a
compatible nested-lattice representation decomposes each quantized update into
bounded refinement layers such that every newly released layer refines the
same quantized value rather than replacing it by an independently quantized
surrogate. Second, a sealed-release gate adds data-independent pad shares whose
aggregate remains unknown to the server until a non-colluding controller or
trustee set authorizes the corresponding layer. Ordinary SecAgg still protects
individual uploads. The gate is treated as an abstract functionality rather than a new threshold
cryptosystem: its realization is delegated to standard threshold or multi-server
tools \cite{paillier,prio}, and the contribution of this paper is the
progressively refinable representation presented to that gate together with a
characterization of the costs induced by making its layers separately
releasable. Thus, PSA separates the algebraic mechanism that enables consistent coarse-to-fine reconstruction from the release mechanism that controls when each refinement becomes visible.

The structured representation is needed because independently releasable
pieces must remain closed under aggregation and admit controlled modular
reduction. A compatible lattice chain supplies both properties: layer sums
remain on the corresponding lattice, and successive released layers reconstruct
a consistent coarse-to-fine sequence. A gate without such a representation
would only delay an all-or-nothing value, while the representation without a
gate would not enforce delayed disclosure. To our knowledge, this is the first formulation of secure aggregation that enables post-upload progressive authorization of successively finer resolutions of the same aggregate without renewed client participation. We further characterize the overflow, communication, privacy, and distortion costs induced by making these refinements separately releasable.
\vspace{-10pt}
\subsection{Contributions}

The paper makes the following contributions.
\begin{enumerate}[leftmargin=1.4em]
\item We formalize progressive-resolution aggregate release after a single
client upload and give a PSA architecture based on compatible nested-lattice
layers, modular SecAgg, and aggregate release pads.
\item For $K$ clients and radix $\rho$, we characterize the carries induced by
separate modular reductions. We derive a worst-case-data margin of order
$\log_\rho K$ and, under independent zero-mean layer symbols with normalized
aggregate-noise scale $O(\sqrt K)$, a margin
$\tfrac12\log_\rho K+O(1)$.
\item For interfaces exposing each separately authorized release only through
one modular sum, we prove the same $\tfrac12\log_\rho K$ margin order necessary
under an i.i.d.\ uniform-digit prior. This converse is interface-specific and
is not a lower bound for arbitrary interactive or jointly encoded protocols.
\item We give release-by-release R\'enyi-DP composition, instantiate it with
product discrete-Gaussian noise, and compare its leading sensitivity cost with
a matched flat representation. When the admissible range fills the balanced
digit alphabet in every coordinate, the leading bound is neutral rather than
strictly better.
\item We derive exact full-resolution dithered-quantization mean-squared error
(MSE), distribution-free truncation bounds, and geometric truncation scaling
under an explicit uniform-residue condition.
\item We evaluate the claims in the regimes in which they are stated: exact
finite-alphabet calculations examine carry and arithmetic-payload scaling, while
federated experiments on MNIST and CIFAR-10 compare PSA with distributed
discrete-Gaussian and Skellam mechanisms at matched differential-privacy targets
and use a compression-only lattice baseline to isolate representation cost.
\end{enumerate}

\section{Related Work}
\label{sec:related}

\emph{Secure aggregation and distributed DP.} Pairwise-masked SecAgg reveals a
sum while hiding individual inputs \cite{bonawitz}, with later protocols
reducing its overhead \cite{bell}. The distributed discrete-Gaussian
\cite{ddgauss} and Skellam \cite{skellam} mechanisms add client-side integer
noise and account for quantization and modular wraparound of a single released
sum. PSA differs in functionality: several mutually consistent resolutions
must remain separately releasable after one upload, so each separately sealed
sum requires its own overflow protection.

\emph{Delayed decryption and split trust.} Additively homomorphic and threshold
encryption \cite{paillier}, and multi-server systems such as Prio \cite{prio},
can withhold decryption capability until an authorized party or quorum acts.
PSA can use such mechanisms to realize its release controller, but does not
claim a new decryption primitive. Its contribution is the object presented to
that gate: a coarse-to-fine decomposition whose authorized parts reconstruct
one consistent aggregate.

\emph{Quantization for FL.} UVeQFed \cite{uveqfed} uses subtractive-dithered
lattice quantization of model updates; JoPEQ \cite{jopeq} combines lattice
quantization with privacy noise; compressed private aggregation \cite{cpa}
uses nested quantizers for rate refinement within a release; and OLALa
\cite{olala} adapts lattices to heterogeneous clients. In these methods,
nesting primarily serves rate--distortion or adaptation. PSA instead assigns
nesting to an authorization order, which requires successive pieces to
reconstruct one another exactly rather than merely reduce distortion. ScionFL
\cite{scionfl} addresses malicious inputs to modular secure aggregation, a
threat excluded by the present honest-but-curious model and therefore
complementary to PSA.

\emph{Nested lattices and computation.} Nested lattices are established
structured-coding tools \cite{zamirbook}, including for multi-layer computation
\cite{aircpu} and lattice-coded FL \cite{fedcpu, fedcpu_conf}. Those works exploit closure
under integer combinations. PSA uses the same algebraic property for secure
aggregation, but additionally requires a digit-compatibility condition so that
successive quantizer differences are valid bounded representatives. The
canonical construction used here is the balanced odd-radix integer chain.
\vspace{-10pt}
\section{Preliminaries}
\label{sec:prelim}

\subsection{Lattice Notation and Compatible Digits}

For a block dimension $n_{\mathrm b}\ge1$, let a full-rank lattice
$\lat{}\subset\R^{n_{\mathrm b}}$ have nearest-neighbor quantizer
$Q_{\lat{}}(\mathbf{x})\in\arg\min_{\bm\lambda\in\lat{}}
\|\mathbf{x}-\bm\lambda\|$. Its centered Voronoi region is $\vor{}$, with
boundary ties fixed deterministically, and centered reduction is
$[\mathbf{x}]\md{\lat{}}=\mathbf{x}-Q_{\lat{}}(\mathbf{x})$. Denote its
covering radius, packing radius, and per-dimension second moment by
$\rcov{\lat{}}$, $\rpack{\lat{}}$, and $\sigma^2(\lat{})$, respectively.

Fix a number of layers $L\ge1$ and an odd integer radix $\rho\ge3$. Consider
the self-similar chain
\begin{align}\label{eq:chain}
\lat{\ell+1}=\rho\lat{\ell},\
\lat{L+1}\subset\cdots\subset\lat1,
\ \ell\in\{1,\ldots,L\}.
\end{align}
For each layer, the quotient group $G_\ell=\lat\ell/\lat{\ell+1}$ has
$|G_\ell|=\rho^{n_{\mathrm b}}$. Let
$\mathcal C_\ell\triangleq\lat\ell\cap\vor{\ell+1}$ under the same boundary
convention. Then $\mathcal C_\ell$ is a complete set of coset representatives
for $G_\ell$, so every element of $\lat\ell$ is congruent modulo
$\lat{\ell+1}$ to exactly one element of $\mathcal C_\ell$.

Nesting alone does not ensure that consecutive quantizers produce one of these
representatives. PSA therefore uses the following structural condition.
\begin{assumption}[Digit-compatible chain]\label{ass:compatible}
For every $\mathbf{x}\in\lat1\cap\vor{L+1}$ and every
$\ell\in\{1,\ldots,L\}$,
\begin{align}\label{eq:compat}
Q_{\lat\ell}(\mathbf{x})-Q_{\lat{\ell+1}}(\mathbf{x})
\in\mathcal C_\ell .
\end{align}
\end{assumption}

The canonical construction satisfying Assumption~\ref{ass:compatible} is
\begin{align}\label{eq:intchain}
\lat1=h\Z^{n_{\mathrm b}},\qquad
\lat\ell=h\rho^{\ell-1}\Z^{n_{\mathrm b}},
\end{align}
where $h>0$. Coordinatewise nearest-integer rounding gives the balanced
radix-$\rho$ alphabet
\begin{align}\label{eq:intalphabet}
\mathcal C_\ell=h\rho^{\ell-1}
\left\{-\frac{\rho-1}{2},\ldots,\frac{\rho-1}{2}\right\}^{n_{\mathrm b}},
\end{align}
and \eqref{eq:compat} follows from balanced Euclidean division. For this
integer chain with odd $\rho$, no point of $\lat\ell$ lies on the boundary of
$\vor{\ell+1}$, so the alphabet needs no boundary tie-breaking and satisfies
$\mathcal C_\ell=-\mathcal C_\ell$. Thus odd $\rho$ makes the representative
alphabet symmetric about the origin. If $\mathbf c$ is uniform on
$\mathcal C_\ell$, then
\begin{align}\label{eq:layerenergy}
\E\|\mathbf c\|^2
=n_{\mathrm b}h^2\rho^{2(\ell-1)}\frac{\rho^2-1}{12}.
\end{align}
Equation~\eqref{eq:layerenergy} is a property of a uniform digit model; it is
not implied by subtractive dithering or by Assumption~\ref{ass:compatible}.
\vspace{-10pt}
\section{Progressive-Resolution Secure Aggregation}
\label{sec:scheme}
\vspace{-4pt}
\subsection{Setting}

There are $K\ge1$ clients indexed by $k\in\{1,\ldots,K\}$. Client $k$ holds
an update $\mathbf u_k\in\R^d$, where $d$ is the model dimension. Let
$\bm\mu_0\in\R^d$ be a shared reference and let $\Delta>0$ be the clipping
radius. Writing $\operatorname{clip}_\Delta$ for clipping to Euclidean norm at
most $\Delta$, define
\begin{align}\label{eq:target}
\mathbf s_\Delta
=K\bm\mu_0+\sum_{k=1}^K\mathbf g_k,
\qquad
\mathbf g_k=\operatorname{clip}_\Delta(\mathbf u_k-\bm\mu_0),
\end{align}
so $\|\mathbf g_k\|\le\Delta$. The protocol targets
$\mathbf s_\Delta$; when clipping is inactive,
$\mathbf s_\Delta=\sum_k\mathbf u_k$. Each $d$-dimensional vector is processed
in $J=\lceil d/n_{\mathrm b}\rceil$ lattice blocks, zero-filled in the last
block if necessary. Recentering on $\bm\mu_0$, typically the previous global
model, reduces the range that the lattice chain must cover.
\vspace{-10pt}
\subsection{Threat Model and Scope}
\label{sec:threat}

For \emph{individual-update privacy}, PSA retains the honest-but-curious
SecAgg model: the aggregation server may collude with a bounded number of
clients, and dropouts are handled by the underlying SecAgg recovery protocol
\cite{bonawitz,bell}. For \emph{delayed aggregate release}, PSA additionally
assumes that the aggregation server does not collude before authorization with
the release controller or with a threshold of trustees. The controller handles
only data-independent gate material and receives neither individual updates nor
an unmasked aggregate.

Each separately authorizable layer will be sealed by an aggregate release pad.
The information-theoretic delayed-release statements require that pad, for every
unreleased layer and realized active set, to be uniform and unknown to the
server; pseudorandom pads replace perfect hiding by computational
indistinguishability. If the controller colludes with the server before a
release, delayed authorization for that release is lost, while the underlying
SecAgg protection of individual uploads remains. The analysis excludes active
poisoning, malformed lattice symbols, verifiable authorization policy, and
traffic analysis; participation metadata and the identities of released layers
are public. Robust input validation such as \cite{scionfl} is complementary;
because each PSA symbol is a bounded digit, such validation reduces to
per-layer range checks on the modular uploads, at a cost that scales with the
number of separately sealed sums.
The release pad and DP noise serve different purposes: the pad hides an
unauthorized layer aggregate, whereas DP limits what an authorized aggregate
reveals. The gate does not remove information about a withheld refinement that
is already implied by authorized outputs.
\vspace{-10pt}
\subsection{Client Encoding}
\label{sub:enc}

\emph{1) Scaling and subtractive dithering.} A public gain $a>0$ maps the
clipped update into the lattice range. The formulas below describe one
$n_{\mathrm b}$-dimensional block; to avoid an additional block index, that
block of $\mathbf g_k$ is denoted again by $\mathbf g_k$. Client $k$ sets
$\mathbf v_k=a\mathbf g_k$ and draws an independent public dither
$\mathbf d_k\sim\mathrm{Unif}(\vor1)$, independent of the update. Define
\begin{align*}
\mathbf e_k
=Q_{\lat1}(\mathbf v_k+\mathbf d_k)-\mathbf d_k-\mathbf v_k.
\end{align*}
Under the standard subtractive-dither model \cite{zamirbook}, $\mathbf e_k$
is independent of $\mathbf v_k$, has zero mean,
$\E\|\mathbf e_k\|^2=n_{\mathrm b}\sigma^2(\lat1)$, and satisfies
$\|\mathbf e_k\|\le\rcov{\lat1}$ almost surely. The client forms
$\mathbf x_k=Q_{\lat1}(\mathbf v_k+\mathbf d_k)\in\lat1$, hence
$\mathbf x_k=\mathbf v_k+\mathbf d_k+\mathbf e_k$ and
\begin{align*}
\|\mathbf x_k\|\le B\triangleq a\Delta+2\rcov{\lat1}.
\end{align*}
The sufficient range condition
$B<\rpack{\lat{L+1}}=\rho^L\rpack{\lat1}$ places every $\mathbf x_k$ inside
$\vor{L+1}$ and gives $Q_{\lat{L+1}}(\mathbf x_k)=\mathbf0$.

\emph{2) Compatible telescoping decomposition.} Under
Assumption~\ref{ass:compatible}, for each $\ell\in\{1,\ldots,L\}$ define
\begin{align}\label{eq:telescope}
\mathbf x_k^\ell
&=Q_{\lat\ell}(\mathbf x_k)-Q_{\lat{\ell+1}}(\mathbf x_k)
\in\mathcal C_\ell,\\
\mathbf x_k&=\sum_{\ell=1}^{L}\mathbf x_k^\ell .
\end{align}
For any $\ell_0\in\{1,\ldots,L\}$, telescoping also gives
\begin{align}\label{eq:partial}
\sum_{\ell=\ell_0}^{L}\mathbf x_k^\ell
=Q_{\lat{\ell_0}}(\mathbf x_k),
\end{align}
so larger layer indices are coarser and decreasing $\ell_0$ adds finer
refinements of the same quantized point.

\emph{3) Per-layer DP noise.} Client $k$ draws zero-mean lattice noise
$\mathbf z_k^\ell\in\lat\ell$, independently across clients and layers and
independently of the data and dithers, and transmits the noisy layer symbol
$\widetilde{\mathbf x}_k^\ell=\mathbf x_k^\ell+\mathbf z_k^\ell$. For the
integer-chain DP instantiation, each coordinate has probability mass
\begin{align}\label{eq:dgdef}
\Prob[Z=h\rho^{\ell-1}j]
\propto\exp\!\left[-\frac{(h\rho^{\ell-1}j)^2}{2s_\ell^2}\right],
\qquad j\in\Z,
\end{align}
where $s_\ell>0$ is the discrete-Gaussian scale in physical lattice units.
Its actual per-coordinate variance is denoted $\bar s_\ell^2$. This is the
discrete Gaussian analyzed in \cite{canonne}.
\vspace{-10pt}
\subsection{Per-Layer Secure Aggregation and Sealed Release}
\label{sub:gate}

A single-client digit lies in $\mathcal C_\ell$, whereas the sum of $K$ such
digits has a larger dynamic range. For each layer choose an integer overflow
margin $m_\ell\ge0$ and define the modulus lattice and finite quotient
\begin{align}\label{eq:modgroup}
\lat{}^{(\ell)}&=\rho^{m_\ell}\lat{\ell+1},\\
\bar G_\ell&=\lat\ell/\lat{}^{(\ell)}
\cong(\Z/\rho^{1+m_\ell}\Z)^{n_{\mathrm b}}.
\end{align}
Let $\mathcal A_\ell\subseteq\{1,\ldots,K\}$ denote the realized active set
for layer $\ell$. Client $k\in\mathcal A_\ell$ contributes a release-pad share
$\mathbf b_k^\ell\in\bar G_\ell$. The shares are independent across clients,
layers, and rounds. If at least one active honest client contributes a share
that is uniform on $\bar G_\ell$, independent of the remaining shares, and
unknown to the server, then the aggregate pad
\begin{align}\label{eq:pad}
\mathbf B_\ell=\left[\sum_{k\in\mathcal A_\ell}\mathbf b_k^\ell\right]
\md{\lat{}^{(\ell)}}
\end{align}
is uniform on $\bar G_\ell$ from the server's view.

The release controller is assumed able to recover $\mathbf B_\ell$ for the
realized active set without renewed client participation, for example from
per-client keys or trustee-held shares. This capability is assumed rather than
constructed. With true one-time pad shares the resulting ciphertext hiding is
information-theoretic; pseudorandom-generator (PRG) or key-derivation-function
(KDF) shares provide the corresponding computational property and require
domain separation across layers and rounds.

Let $\mathbf m_{kj}^\ell\in\bar G_\ell$ denote the standard pairwise SecAgg
mask shared by clients $k$ and $j$, with
$\mathbf m_{kj}^\ell=-\mathbf m_{jk}^\ell$. Client $k$ submits
\begin{align}\label{eq:upload}
\mathbf y_k^\ell=
\Big[\widetilde{\mathbf x}_k^\ell+\mathbf b_k^\ell
+\sum_{j\ne k}\mathbf m_{kj}^\ell\Big]
\md{\lat{}^{(\ell)}} .
\end{align}
After the ordinary SecAgg masks cancel over the active set, the server holds
only the sealed aggregate
\begin{align}\label{eq:sealedagg}
\mathbf u_\ell
=\Big[\sum_{k\in\mathcal A_\ell}\widetilde{\mathbf x}_k^\ell
+\mathbf B_\ell\Big]\md{\lat{}^{(\ell)}}.
\end{align}
When layer $\ell$ is authorized, the controller supplies $\mathbf B_\ell$ and
the server obtains
\begin{align}\label{eq:tell}
\mathbf t_\ell
=\big[\mathbf u_\ell-\mathbf B_\ell\big]\md{\lat{}^{(\ell)}}
=\Big[\sum_{k\in\mathcal A_\ell}\widetilde{\mathbf x}_k^\ell\Big]
\md{\lat{}^{(\ell)}}.
\end{align}
Thus SecAgg masks protect individual uploads and cancel at aggregation, whereas
the aggregate release pad remains until authorization.

For the homogeneous active set $\mathcal A_\ell=\{1,\ldots,K\}$ at every
released layer, let $\ell_0\in\{1,\ldots,L\}$ be the finest authorized layer,
so the authorized suffix is $\{\ell_0,\ldots,L\}$. Using
\eqref{eq:partial}, the server reconstructs
\begin{align}\label{eq:recon}
\widehat{\mathbf s}_{\Delta,\ell_0}
=K\bm\mu_0+\frac1a\left(
\sum_{\ell=\ell_0}^{L}\mathbf t_\ell
-\sum_{k=1}^{K}\mathbf d_k\right).
\end{align}
Authorizing one finer layer requires one additional controller token but no new
client upload. An optional common orthogonal rotation may be applied to the
clipped updates before block quantization to spread update energy across
coordinates and reduce the per-coordinate dynamic range; it is inverted inside
the parentheses of \eqref{eq:recon} and changes neither the subsequent algebra
nor the privacy accounting.

\begin{remark}[Authorization schedules]\label{rem:schedule}
The per-layer construction gives the finest authorization granularity. For the
canonical integer chain \eqref{eq:intchain}, contiguous layers can instead be
grouped into $R\in\{1,\ldots,L\}$ stages. Index the stages in fine-to-coarse
arithmetic order as an ordered partition
$\mathcal G_1,\ldots,\mathcal G_R$ of $\{1,\ldots,L\}$, with each
$\mathcal G_r$ contiguous; policy releases them in the reverse order
$\mathcal G_R,\ldots,\mathcal G_1$. For $r\in\{1,\ldots,R\}$, let
$b_r=|\mathcal G_r|$, $\ell_r^-=\min\mathcal G_r$, and
$\ell_r^+=\max\mathcal G_r$. Define
$\Lambda'_r\triangleq\lat{\ell_r^-}$ for $r\in\{1,\ldots,R\}$ and
$\Lambda'_{R+1}\triangleq\lat{L+1}$. These lattices form the coarsened
integer chain
$\Lambda'_{r+1}=\rho^{b_r}\Lambda'_r$ for $r\in\{1,\ldots,R\}$,
with $\Lambda'_1=\lat1$.
Balanced Euclidean division therefore gives a compatible stage symbol
\begin{align*}
\mathbf x_k^{r}
=\sum_{\ell\in\mathcal G_r}\mathbf x_k^\ell
=Q_{\lat{\ell_r^-}}(\mathbf x_k)
-Q_{\lat{\ell_r^++1}}(\mathbf x_k)
\in\lat{\ell_r^-}\cap\vor{\ell_r^++1}.
\end{align*}
A stage uses one pad and one independent noise draw
$\mathbf z_k^r\in\lat{\ell_r^-}$, with actual per-coordinate variance
$\bar s_r^2$, and a modulus
$\lat{}^{(r)}=\rho^{m_r}\lat{\ell_r^++1}$. The corresponding quotient has
$\rho^{n_{\mathrm b}(b_r+m_r)}$ elements. The per-layer statements apply to
a stage through this coarsened integer chain. Expressed in the
original radix, a concentration margin of
$\tfrac12\log_{\rho^{b_r}}K+O(1)$ coarsened-chain steps corresponds to
$b_r$ times as many original-radix exponents, hence to
$\tfrac12\log_\rho K+O(1)$ original-radix margin steps. The extremes are $R=L$ (one
layer per stage) and $R=1$ (one flat stage).
\end{remark}

\begin{remark}[Relation to a flat secure sum]\label{rem:flat}
With one layer, or with one stage containing all $L$ layers, PSA reduces to one
quantized modular sum plus an optional release gate. If immediate disclosure is
allowed, setting the release pads to zero recovers ordinary SecAgg arithmetic.
\end{remark}
\vspace{-14pt}
\subsection{Secrecy of an Unreleased Layer}

\begin{proposition}[Ciphertext hiding for an unreleased layer]
\label{prop:secrecy}
Fix a layer $\ell$ and condition on its active set and all public randomness.
Let $\mathbf W_\ell$ denote the true noisy modular sum and let $V$ denote any
other server-side information that may be correlated with $\mathbf W_\ell$.
Assume, as in the pad-uniformity condition of Section~\ref{sub:gate}, that
$\mathbf B_\ell$ is uniform and independent of $(\mathbf W_\ell,V)$ from
the server's view. Define the sealed random variable
$\mathbf U_\ell=[\mathbf W_\ell+\mathbf B_\ell]\md{\lat{}^{(\ell)}}$.
Then
$I(\mathbf W_\ell;\mathbf U_\ell\mid V)=0$, where $I(\cdot;\cdot\mid\cdot)$
denotes conditional mutual information. Thus the sealed ciphertext adds no
information about the withheld layer beyond the server's pre-existing side
information. After $\mathbf B_\ell$ is released, the server learns exactly
$\mathbf W_\ell$. With PRG-generated pads, the corresponding statement is
computational rather than information-theoretic.
\end{proposition}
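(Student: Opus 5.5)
The plan is the standard one-time-pad argument on the finite abelian group $\bar G_\ell=\lat\ell/\lat{}^{(\ell)}$. The claim $I(\mathbf W_\ell;\mathbf U_\ell\mid V)=0$ is equivalent to conditional independence of $\mathbf U_\ell$ and $\mathbf W_\ell$ given $V$. So I will show that, conditional on any realization $(\mathbf W_\ell,V)=(\mathbf w,v)$, the sealed variable $\mathbf U_\ell$ is uniform on $\bar G_\ell$. This conditional law then does not depend on $\mathbf w$, which gives the claim.

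\emph{Step 1: reduce to a statement in $\bar G_\ell$.} Identify $\mathbf W_\ell$, $\mathbf B_\ell$ and $\mathbf U_\ell$ with elements of $\bar G_\ell$ through centered reduction modulo $\lat{}^{(\ell)}$. This is well defined because $\mathbf W_\ell\in\lat\ell$, since every $\widetilde{\mathbf x}_k^\ell\in\lat\ell$. Reduction is a homomorphism, so $\mathbf U_\ell=\mathbf W_\ell\oplus\mathbf B_\ell$ in the group. The group is finite with $\rho^{n_{\mathrm b}(1+m_\ell)}$ elements, by \eqref{eq:modgroup}.

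\emph{Step 2: translation invariance.} Fix $(\mathbf w,v)$ and any $\mathbf g\in\bar G_\ell$. By the assumed independence of $\mathbf B_\ell$ from $(\mathbf W_\ell,V)$,
\begin{align*}
\Prob[\mathbf U_\ell=\mathbf g\mid \mathbf W_\ell=\mathbf w,V=v]
=\Prob[\mathbf B_\ell=\mathbf g\ominus\mathbf w]
=|\bar G_\ell|^{-1}.
\end{align*}
The second equality holds because $\mathbf B_\ell$ is uniform and $\mathbf b\mapsto\mathbf b\ominus\mathbf w$ is a bijection of $\bar G_\ell$. Hence $p(\mathbf u\mid\mathbf w,v)=p(\mathbf u\mid v)=|\bar G_\ell|^{-1}$. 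Writing $I(\mathbf W_\ell;\mathbf U_\ell\mid V)=\E\log\bigl[p(\mathbf U_\ell\mid\mathbf W_\ell,V)/p(\mathbf U_\ell\mid V)\bigr]$, the mutual information is $0$. If $V$ is continuous-valued, the same computation applies with regular conditional distributions. For the interpretive sentence, note also that $I(\mathbf W_\ell;(\mathbf U_\ell,V))=I(\mathbf W_\ell;V)$ by the chain rule.

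\emph{Step 3: release and the computational variant.} After release, $\mathbf B_\ell$ is known, and \eqref{eq:tell} gives $\mathbf t_\ell=[\mathbf U_\ell-\mathbf B_\ell]\md{\lat{}^{(\ell)}}=\mathbf W_\ell$ exactly, again by the homomorphism property. For PRG or KDF pads, I would use a hybrid argument. Replace the pseudorandom $\mathbf B_\ell$ by a truly uniform pad. By PRG security (with domain separation across layers and rounds, so that the other pads in $V$ do not correlate with it), any efficient distinguisher between $(\mathbf W_\ell,\mathbf U_\ell,V)$ and $(\mathbf W_\ell,\mathbf U'_\ell,V)$, where $\mathbf U'_\ell$ is fresh uniform and independent of $(\mathbf W_\ell,V)$, would yield a PRG distinguisher.

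\emph{Main obstacle.} The algebra is routine. The delicate point is justifying that the hypothesis "$\mathbf B_\ell$ is uniform and independent of $(\mathbf W_\ell,V)$" actually follows from the pad-uniformity condition when $V$ contains colluding clients' shares and the masked uploads $\mathbf y_k^\ell$. I would argue that the single honest uniform share $\mathbf b_k^\ell$ not known to the server makes the sum uniform given everything else, because convolving any distribution with a uniform one on a finite group gives the uniform distribution. The individual uploads are themselves masked by pairwise SecAgg masks that sum to zero, so $V$ reveals only functions of $\mathbf U_\ell$ together with quantities independent of that honest share. Since the proposition takes this independence as an assumption, I would state this reduction as a remark rather than as part of the proof.
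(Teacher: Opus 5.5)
Your argument is correct and is essentially the paper's proof: conditional on each $(\mathbf w,v)$, translation by the independent uniform pad makes $\mathbf U_\ell$ uniform on $\bar G_\ell$, so $\mathbf U_\ell$ is independent of $\mathbf W_\ell$ given $V$. Your release identity and PRG hybrid spell out what the paper asserts without proof.

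One small caution concerns only your closing remark. $V$ cannot include the layer-$\ell$ masked uploads $\mathbf y_k^\ell$, because they sum to $\mathbf U_\ell$ and would make $\mathbf B_\ell$ a function of $(\mathbf W_\ell,V)$. Those uploads belong on the ciphertext side, as a randomized function of $(\mathbf U_\ell,V)$ handled by data processing, not inside $V$.
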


\begin{proof}
For every fixed pair $(\mathbf w,v)$, addition of an independent uniform pad
maps $\mathbf w$ to a uniform element of $\bar G_\ell$. Hence
$\mathbf U_\ell$ is conditionally uniform and independent of
$\mathbf W_\ell$ given $V$. Standard SecAgg separately protects the individual
uploads.
\end{proof}
\vspace{-10pt}
\section{Exactness, Carries, and Overload}
\label{sec:carry}

After authorization, a layer still provides only a centered modular residue.
To characterize when that residue equals the true noisy sum, assume in this
section that every client contributes to every layer and define
$\mathbf S_\ell=\sum_{k=1}^K\mathbf x_k^\ell$ and
$\mathbf Z_\ell=\sum_{k=1}^K\mathbf z_k^\ell$.

\begin{lemma}[Carry structure]\label{lem:carry}
For each released layer there is a unique
\begin{align}
\mathbf c_\ell
=Q_{\lat{}^{(\ell)}}(\mathbf S_\ell+\mathbf Z_\ell)
\in\lat{}^{(\ell)}
\end{align}
such that
$\mathbf S_\ell+\mathbf Z_\ell=\mathbf t_\ell+\mathbf c_\ell$.
The carry is zero if and only if $\mathbf S_\ell+\mathbf Z_\ell$ lies in
the centered Voronoi region of $\lat{}^{(\ell)}$. If it is
nonzero, then
$\|\mathbf c_\ell\|\ge2\rpack{\lat{}^{(\ell)}}$.
\end{lemma}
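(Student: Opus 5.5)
The plan is to unwind the definition of $\mathbf t_\ell$ in \eqref{eq:tell} and read everything off the centered reduction $[\mathbf x]\md{\lat{}}=\mathbf x-Q_{\lat{}}(\mathbf x)$ from Section~\ref{sec:prelim}, applied to the modulus lattice $\lat{}^{(\ell)}$.

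\emph{Existence.} With every client active, the pads and pairwise masks cancel modulo $\lat{}^{(\ell)}$, so
\begin{align*}
\mathbf t_\ell=[\mathbf S_\ell+\mathbf Z_\ell]\md{\lat{}^{(\ell)}}.
\end{align*}
By the definition of centered reduction, this equals $\mathbf S_\ell+\mathbf Z_\ell-Q_{\lat{}^{(\ell)}}(\mathbf S_\ell+\mathbf Z_\ell)$. Setting $\mathbf c_\ell=Q_{\lat{}^{(\ell)}}(\mathbf S_\ell+\mathbf Z_\ell)\in\lat{}^{(\ell)}$ then gives the claimed identity.

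\emph{Uniqueness.} Here $\mathbf t_\ell$ is regarded as the fixed centered representative in $\vor{}^{(\ell)}$, the centered Voronoi region of $\lat{}^{(\ell)}$ with the deterministic tie convention. Suppose $\mathbf S_\ell+\mathbf Z_\ell=\mathbf t_\ell+\mathbf c$ for some other $\mathbf c\in\lat{}^{(\ell)}$. Then $\mathbf c-\mathbf c_\ell=\mathbf t_\ell-\mathbf t_\ell=\mathbf 0$ trivially, because $\mathbf t_\ell$ is determined. So uniqueness is immediate once $\mathbf t_\ell$ is fixed: $\mathbf c_\ell$ is forced to equal $\mathbf S_\ell+\mathbf Z_\ell-\mathbf t_\ell$. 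The substantive point is that $\mathbf t_\ell$ is well defined, and this holds because $\vor{}^{(\ell)}$, with ties broken deterministically, contains exactly one point of each coset of $\lat{}^{(\ell)}$.

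\emph{Zero-carry characterization.}
\begin{enumerate}[leftmargin=1.4em]
\item If $\mathbf S_\ell+\mathbf Z_\ell\in\vor{}^{(\ell)}$, then $\mathbf 0$ is its nearest lattice point under the tie convention defining the centered region, so $Q_{\lat{}^{(\ell)}}(\mathbf S_\ell+\mathbf Z_\ell)=\mathbf 0$.
\item Conversely, if $\mathbf c_\ell=\mathbf 0$, then $\mathbf S_\ell+\mathbf Z_\ell=\mathbf t_\ell$, and $\mathbf t_\ell$ lies in the centered region by construction.
\end{enumerate}
The only care needed is to state that the boundary convention for $Q$ and for $\vor{}^{(\ell)}$ is the same one.

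\emph{Norm bound.} If $\mathbf c_\ell\neq\mathbf 0$, it is a nonzero point of $\lat{}^{(\ell)}$. The packing radius is half the minimum distance, so every nonzero lattice vector satisfies $\|\bm\lambda\|\ge 2\rpack{\lat{}^{(\ell)}}$, which gives the bound. This is the step I would state most explicitly, since it depends on the standard identity $\rpack{\lat{}}=\tfrac12\min_{\bm\lambda\neq\mathbf0}\|\bm\lambda\|$. Apart from that, I expect no real obstacle: the whole lemma is definitional bookkeeping. The one subtle point is making sure that mask and pad cancellation modulo $\lat{}^{(\ell)}$ commutes with the centered reduction, which holds because $[\cdot]\md{\lat{}^{(\ell)}}$ depends only on the coset.
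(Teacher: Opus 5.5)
Your proposal is correct and follows the paper's own argument. The identity, uniqueness, and zero-carry characterization are just the definition of centered reduction modulo $\lat{}^{(\ell)}$ under the common tie convention, and the norm bound is the fact that a nonzero point of $\lat{}^{(\ell)}$ has norm at least its minimum distance $2\rpack{\lat{}^{(\ell)}}$; you simply spell out these steps in more detail than the paper's one-line proof.
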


\begin{proof}
This is the definition of centered reduction modulo
$\lat{}^{(\ell)}$; a nonzero point of that lattice has norm at
least its minimum distance, $2\rpack{\lat{}^{(\ell)}}$.
\end{proof}

Carries also localize by scale. If $\mathcal B$ is the set of layers with
nonzero carries, their contribution to the reconstruction error at resolution
$\ell_0$ is
$-a^{-1}\sum_{\ell\in\mathcal B,\,\ell\ge\ell_0}\mathbf c_\ell$;
a carry on an unreleased finer layer $\ell<\ell_0$ cannot alter that coarser
authorized output. This localization does not make a carry on a coarse layer
small.

A nonzero carry is therefore not a small perturbation; overflow margins are
designed to make it absent with high probability. Because the DP noise is
unbounded, zero carry is necessarily probabilistic when DP is enabled. We use
$\delta_w\in(0,1)$ for concentration failure of the data term and
$\delta_z\in(0,1)$ for the noise tail. We therefore give a worst-case rule that makes no distributional assumption on
layer symbols and a smaller-margin rule under an explicit zero-mean and
independence assumption.

\begin{assumption}[Zero-mean layer symbols]\label{ass:qu}
For every layer on which the concentration rule is used, the
vectors $\{\mathbf x_k^\ell\}_{k=1}^K$ are independent across
clients and satisfy $\E[\mathbf x_k^\ell]=\mathbf0$.
\end{assumption}

Subtractive dithering makes the quantization error unbiased but does not imply
Assumption~\ref{ass:qu}. For the integer chain, the assumption holds under the
i.i.d.\ uniform balanced-digit model used in the converse; its suitability for
federated updates is an empirical workload question.

For both margin rules, the left-hand side is
$\rho^{m_\ell}p_{\ell+1}=\rpack{\lat{}^{(\ell)}}$: a point of smaller norm
lies in the zero cell, so its carry vanishes by Lemma~\ref{lem:carry}. Each
rule bounds $\|\mathbf S_\ell+\mathbf Z_\ell\|$ and makes the margin exceed
it.

\begin{theorem}[Margin rules]\label{thm:margin}
Fix $\delta_w,\delta_z\in(0,1)$. Let $r_{\ell+1}=\rcov{\lat{\ell+1}}$,
$p_{\ell+1}=\rpack{\lat{\ell+1}}$, and choose any $u_\ell$ such that
$\Prob[\|\mathbf Z_\ell\|>u_\ell]\le\delta_z/L$.
For the product discrete Gaussian on the integer chain, one valid choice is
\begin{align}\label{eq:uint}
u_\ell=s_\ell\sqrt{2K n_{\mathrm b}
\log\frac{2n_{\mathrm b}L}{\delta_z}},
\end{align}
valid whenever
$\widetilde s_\ell\triangleq s_\ell/(h\rho^{\ell-1})\ge1/2$. Under this
condition each coordinate of
$\mathbf z_k^\ell$ is sub-Gaussian with proxy $s_\ell$ \cite{mp13}, so each
coordinate of $\mathbf Z_\ell$ has proxy $s_\ell\sqrt K$, and
\eqref{eq:uint} follows by a union bound over coordinates. Then:
\begin{enumerate}[leftmargin=1.4em,label=(\alph*)]
\item \emph{Worst-case data.} If
\begin{align}\label{eq:worstmargin}
\rho^{m_\ell}p_{\ell+1}>K r_{\ell+1}+u_\ell,
\end{align}
then $\Prob[\mathbf c_\ell\ne\mathbf0]\le\delta_z/L$.
\item \emph{Concentrated data.} Under
Assumption~\ref{ass:qu}, define the total layer second moment
$V_\ell=\sum_{k}\E\|\mathbf x_k^\ell\|^2$.
If
\begin{align}\label{eq:conc}
\rho^{m_\ell}p_{\ell+1}>
&\sqrt{2V_\ell\log\frac{4L}{\delta_w}}
+\frac23r_{\ell+1}\log\frac{4L}{\delta_w}
+u_\ell,
\end{align}
then
$\Prob[\mathbf c_\ell\ne\mathbf0]
\le(\delta_w+\delta_z)/L$.
\end{enumerate}
If part (b) is used on all layers, a union bound gives total carry probability
at most $\delta_w+\delta_z$; if part (a) is used on all layers, the
corresponding bound is $\delta_z$. If
$V_\ell^{1/2}/(p_{\ell+1}\sqrt K)$ and
$u_\ell/(p_{\ell+1}\sqrt K)$ are bounded uniformly in $K$ and $\ell$, part
(b) yields $m_\ell=\tfrac12\log_\rho K+O(1)$; part (a) yields
$m_\ell=\log_\rho K+O(1)$ under the analogous $K$-scaling, its data term
growing as $K$ rather than $\sqrt K$. For i.i.d.\ client layers, $V_\ell=K\bar\varsigma_\ell^2$ with
$\bar\varsigma_\ell^2=\E\|\mathbf x_k^\ell\|^2$; under the uniform
balanced-digit model, \eqref{eq:layerenergy} gives this second moment
explicitly.
\end{theorem}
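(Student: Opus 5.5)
Both rules follow the same three-step template. First, Lemma~\ref{lem:carry} shows that the carry vanishes whenever $\|\mathbf S_\ell+\mathbf Z_\ell\|<\rpack{\lat{}^{(\ell)}}$. The reason is that the open ball of radius $\rpack{\lat{}^{(\ell)}}$ lies in the centered Voronoi cell of $\lat{}^{(\ell)}$. Since $\lat{}^{(\ell)}=\rho^{m_\ell}\lat{\ell+1}$, this radius equals $\rho^{m_\ell}p_{\ell+1}$. Second, the triangle inequality gives $\|\mathbf S_\ell+\mathbf Z_\ell\|\le\|\mathbf S_\ell\|+\|\mathbf Z_\ell\|$. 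So it suffices to bound each term with high probability and require the margin to exceed the sum of the two bounds. Third, a union bound over the failure events (and over the $L$ layers for the totals) gives the stated probabilities.

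\emph{Noise term.} I first justify \eqref{eq:uint}. For $\widetilde s_\ell\ge1/2$, each coordinate of the discrete Gaussian on $h\rho^{\ell-1}\Z$ is sub-Gaussian with proxy $s_\ell$; I take this from \cite{mp13} after rescaling to the unit lattice. Independence across clients then gives proxy $s_\ell\sqrt K$ for each coordinate of $\mathbf Z_\ell$. A two-sided tail bound per coordinate at level $\delta_z/(n_{\mathrm b}L)$, combined with a union bound over the $n_{\mathrm b}$ coordinates, gives $\max_i|Z_{\ell,i}|\le s_\ell\sqrt{2K\log(2n_{\mathrm b}L/\delta_z)}$. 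Converting to the Euclidean norm costs a factor $\sqrt{n_{\mathrm b}}$, which yields $u_\ell$.

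\emph{Data term.}
\begin{itemize}[leftmargin=1.4em]
\item For part (a), every digit lies in $\mathcal C_\ell=\lat\ell\cap\vor{\ell+1}$, so $\|\mathbf x_k^\ell\|\le r_{\ell+1}$. The triangle inequality then gives $\|\mathbf S_\ell\|\le Kr_{\ell+1}$ deterministically, and the only failure event is the noise event, with probability at most $\delta_z/L$.
\item For part (b), I apply a vector Bernstein inequality to the independent, zero-mean, bounded summands $\mathbf x_k^\ell$ with $\|\mathbf x_k^\ell\|\le r_{\ell+1}$ and total variance $V_\ell$. Concretely, I would use Pinelis's Hilbert-space Bernstein bound, or the matrix Bernstein inequality via the standard dilation. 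With failure probability $\delta_w/L$, this should give $\|\mathbf S_\ell\|\le\sqrt{2V_\ell\log(4L/\delta_w)}+\tfrac23 r_{\ell+1}\log(4L/\delta_w)$.
\end{itemize}

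\emph{Main obstacle.} The delicate step is matching the constant $4L/\delta_w$ with no dimension factor. The dilation form of matrix Bernstein gives a prefactor of $2\cdot 2=4$ for the $(n_{\mathrm b}+1)$-dimensional dilation, but only in its intrinsic-dimension version. I would therefore use the intrinsic-dimension (effective rank $\le$ 2-type) version, or Minsker's dimension-free bound, which has exactly the prefactor $4$ after setting the tail to $\delta_w/L$. If I cannot obtain this cleanly, I would prove a bound with a slightly larger absolute constant, which leaves the scaling conclusions unchanged.

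\emph{Scaling claims.} Divide the margin condition by $p_{\ell+1}\sqrt K$. In part (b), the right-hand side becomes $O(1)$ plus $\tfrac23(r_{\ell+1}/p_{\ell+1})\log(4L/\delta_w)/\sqrt K$. For the integer chain $r/p=\sqrt{n_{\mathrm b}}$, so this term is $O(1)$. Hence $\rho^{m_\ell}\ge C\sqrt K$ suffices, and the smallest integer choice is $m_\ell=\tfrac12\log_\rho K+O(1)$. In part (a), the $Kr_{\ell+1}/p_{\ell+1}$ term dominates, so the margin is $m_\ell=\log_\rho K+O(1)$. Finally, for i.i.d.\ clients $V_\ell=K\bar\varsigma_\ell^2$, and under the uniform-digit model \eqref{eq:layerenergy} gives $\bar\varsigma_\ell^2$ explicitly.
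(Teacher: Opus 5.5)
Your proposal is correct and follows essentially the same route as the paper's proof: packing-radius ball inside the modulus Voronoi cell, triangle inequality, the deterministic bound $Kr_{\ell+1}$ for (a), Pinelis's Hilbert-space Bernstein inequality for (b), and the coordinatewise sub-Gaussian union bound with $\|\mathbf z\|_2\le\sqrt{n_{\mathrm b}}\|\mathbf z\|_\infty$ for $u_\ell$. Your ``main obstacle'' is not one, because Pinelis's bound $2\exp(-t^2/(2V_\ell+\frac23 r_{\ell+1}t))$ has prefactor $2$ and no dimension factor. With $\lambda=\log(4L/\delta_w)$, $a=\sqrt{2V_\ell\lambda}$, $b=\frac23 r_{\ell+1}\lambda$ and $t=a+b$, one gets $t^2-\lambda(2V_\ell+\frac23 r_{\ell+1}t)=ab\ge0$, so the failure probability is at most $\delta_w/(2L)$; the $4L$ is just slack, and no matrix-Bernstein or Minsker detour is needed.
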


\begin{proof}
Appendix~\ref{app:margin}.
\end{proof}
\vspace{-10pt}
\section{Communication Accounting}
\label{sec:rate}

Preserving the option of later refinement requires every separately sealed
stage to be represented in the initial upload. We first count the resulting
arithmetic payload and then give a converse for the modular interface used by
PSA. The count below is not total cryptographic traffic: it excludes SecAgg
key exchange, authentication, dropout-recovery communication, and any
controller/trustee key-management traffic.

\begin{proposition}[Client upload rate]\label{prop:rate}
Fix an authorization schedule of $R$ stages $\{\mathcal G_r\}$ as in
Remark~\ref{rem:schedule}. A protocol run that preserves the option of
later refinement uploads one message per stage, each an element of that
stage's extended quotient, so per model dimension the arithmetic payload
is
\begin{align}\label{eq:Rpsa}
\mathcal R_{\mathrm{PSA}}=\sum_{r=1}^{R}\bigl(|\mathcal G_r|+m_r\bigr)\log_2\rho
=\Bigl(L+\sum_{r=1}^{R}m_r\Bigr)\log_2\rho,
\end{align}
since the stages partition the $L$ layers. The payload therefore scales
with the number of separately sealed \emph{moduli}, not with the number of
layers: the cost is driven by refinement granularity. Let $\mathcal R_{\mathrm{flat}}$ denote the corresponding one-stage flat
payload. Before its single overflow margin, that representation contributes
$L\log_2\rho$. Margins are exponents of $\rho$, so a margin $m$ costs
$m\log_2\rho$ bits. With worst-case provisioning
Theorem~\ref{thm:margin}(a) gives
$m_r=\log_\rho K+O(1)$ for every modulus, that is $\log_2K+O(1)$ bits
each, and hence the gap $(R-1)\log_2K+O(R)$. Under the zero-mean
concentration assumptions, with aggregate-noise scales $O(\sqrt K)$ in
normalized units, Theorem~\ref{thm:margin}(b) gives
$m_r=\tfrac12\log_\rho K+O(1)$, that is $\tfrac12\log_2K+O(1)$ bits each,
and
\begin{align}\label{eq:gap}
\mathcal R_{\mathrm{PSA}}-\mathcal R_{\mathrm{flat}}
=\frac{R-1}{2}\log_2K+O(R).
\end{align}
The finest schedule $R=L$ recovers $\tfrac{L-1}{2}\log_2K+O(L)$, and $R=1$
gives no gap. These are asymptotics in $K$ for fixed $\rho$ and $L$; if $L$
grows, the $\log L$ terms from the failure allocation must be retained.
\end{proposition}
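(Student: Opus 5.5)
The proof is a counting argument with three parts: first count the size of each stage's message space, then plug in the margin orders from Theorem~\ref{thm:margin}, and finally subtract the single margin of the flat representation.

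\emph{Per-stage payload.} By Remark~\ref{rem:schedule}, stage $r$ uploads one element of the quotient $\Lambda'_r/\lat{}^{(r)}$, where $\lat{}^{(r)}=\rho^{m_r}\lat{\ell_r^++1}=\rho^{b_r+m_r}\Lambda'_r$ on the integer chain. This quotient is isomorphic to $(\Z/\rho^{b_r+m_r}\Z)^{n_{\mathrm b}}$, so it has $\rho^{n_{\mathrm b}(b_r+m_r)}$ elements. Any injective fixed-length encoding of such an element, for instance in $\rho$-ary symbols, costs $(b_r+m_r)\log_2\rho$ bits per dimension. The upload \eqref{eq:upload} is a single element of this group, because the noise, pad and masks are all reduced modulo the modulus. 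Summing over the $R$ stages and using that the $\mathcal G_r$ partition $\{1,\ldots,L\}$, so $\sum_r b_r=L$, gives \eqref{eq:Rpsa}. The flat payload is the case $R=1$, namely $(L+m_{\mathrm{flat}})\log_2\rho$.

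\emph{Margin orders.} Apply Theorem~\ref{thm:margin} to each stage through the coarsened chain, whose radix is $\rho^{b_r}$. Under part (a) the required margin is $\log_{\rho^{b_r}}K+O(1)$ coarsened steps. Converting back to the original radix as in Remark~\ref{rem:schedule}, this is at most $\log_\rho K+O(1)$ original-radix exponents. Under part (b), with normalized second-moment and noise scales $O(\sqrt K)$, the margin is $\tfrac12\log_\rho K+O(1)$.

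\emph{Main obstacle.} The delicate step is showing that every stage pays the full margin order, including the flat stage, so that the difference is exactly $(R-1)$ margins plus $O(R)$. The point is that the margin condition bounds the norm of the aggregate relative to the packing radius of the modulus, $p_{\ell+1}\rho^{m}$. For a stage whose top coarse layer has packing radius $p_{\ell_r^++1}$, the data scale of the normalized aggregate is $\Theta(\sqrt K\,p_{\ell_r^++1})$, because the stage symbol is dominated by its coarsest digit. Hence the smallest integer margin satisfying \eqref{eq:conc} is $\tfrac12\log_\rho K+\Theta(1)$, and this holds for the flat stage as well. The $O(1)$ terms depend on $\rho$, on $\log(L/\delta)$ and on the bounded normalized ratios, all of which are fixed as $K\to\infty$; this is why the statement requires fixed $\rho$ and $L$.

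Writing $m_r=\tfrac12\log_\rho K+O(1)$ for every $r$, including the flat margin, gives
\[
\mathcal R_{\mathrm{PSA}}-\mathcal R_{\mathrm{flat}}=\Bigl(\sum_{r=1}^R m_r-m_{\mathrm{flat}}\Bigr)\log_2\rho=\frac{R-1}{2}\log_2K+O(R),
\]
which is \eqref{eq:gap}. The same computation with the part (a) margins gives $(R-1)\log_2K+O(R)$. Setting $R=L$ and $R=1$ gives the two stated extremes.

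Finally, if $L$ is allowed to grow, the failure-probability allocation $\delta/L$ enters \eqref{eq:conc} and \eqref{eq:uint} through $\log L$ terms. These add $O(\log_\rho\log L)$ to each margin, so they cannot be absorbed into the $O(1)$ constants and must be retained, as the statement notes.
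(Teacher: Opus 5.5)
Your proposal is correct and follows the same route as the paper. You count the $\rho^{n_{\mathrm b}(b_r+m_r)}$ elements of each stage's extended quotient, sum using $\sum_r b_r=L$, apply Theorem~\ref{thm:margin} per stage and to the single flat sum, and subtract. Your extra step arguing that each margin is two-sided $\tfrac12\log_\rho K+O(1)$, which the subtraction needs, is left implicit in the paper; it relies on a non-degenerate stage second moment $V\ge cKp_{\ell_r^++1}^2$ for some $c>0$, which holds under the uniform-digit model but is not implied by Assumption~\ref{ass:qu} alone.
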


\begin{proof}
A stage $\mathcal G_r$ takes $\rho^{n_{\mathrm b}|\mathcal G_r|}$ values, so
its extended quotient element has
$n_{\mathrm b}(|\mathcal G_r|+m_r)\log_2\rho$ bits, against
$n_{\mathrm b}L\log_2\rho$ for the unextended fine index. The margin orders
follow from Theorem~\ref{thm:margin}, applied per stage and, for the flat
index, to its single modular sum; dividing by $n_{\mathrm b}$ gives the
per-dimension rates.
\end{proof}

The penalty is the price of protecting every layer against its own
wraparound within this arithmetic interface, not a universal lower bound:
interactive coding or a joint cross-layer encoding exposes a different
sufficient statistic. A converse must therefore fix an interface before
claiming necessity. The next result fixes the one PSA uses, in which each
authorized stage is exposed through a single modular sum, and shows the
order is unavoidable within it.
\vspace{-10pt}
\subsection{Converse for Layerwise Modular Interfaces}
\label{sec:converse}

\begin{definition}[Layerwise modular interface]\label{def:sealed}
Fix one coordinate $j$ of the balanced integer chain and normalize its layer
digit as
$a_k^\ell\triangleq x_{k,j}^\ell/(h\rho^{\ell-1})$, so
$a_k^\ell\in\{-(\rho-1)/2,\ldots,(\rho-1)/2\}$. Let
$S_\ell=\sum_{k=1}^K a_k^\ell$ and $M_\ell=\rho^{1+m_\ell}$. A decoder has a
layerwise modular interface if its only data-dependent observation from layer
$\ell$, after that layer is authorized, is
$t_\ell=S_\ell\bmod M_\ell$, together with arbitrary data-independent
randomness. This includes PSA's masks and release pads, which are
data-independent from the decoder's viewpoint.
\end{definition}

Under a uniform digit prior, $S_\ell$ has standard deviation of order
$\sqrt K$ and no single value has probability larger than order $K^{-1/2}$.
Consequently a residue modulo $M_\ell$ can identify the exact sum with
constant probability only if the modulus itself grows on the order of
$\sqrt K$.

\begin{theorem}[Anti-concentration converse]\label{thm:converse}
Under the i.i.d.\ uniform balanced-digit prior, let the normalized digits
$a_k^\ell$ of Definition~\ref{def:sealed} be independent across clients and
layers, and define
\begin{align}
S_\ell=\sum_{k=1}^K a_k^\ell,
\qquad
X=\sum_{\ell=1}^{L}\rho^{\ell-1}S_\ell.
\end{align}
Suppose the data-dependent observation from layer $\ell$ is only
$t_\ell=S_\ell\bmod M_\ell$, where
$M_\ell=\rho^{1+m_\ell}$. There are constants $C$ and $K_0$, depending at
most on the fixed radix, such that every decoder of $X$ from the modular
tuple, with estimate $\widehat X$, satisfies, for $K\ge K_0$ and every
$\ell$,
\begin{align}\label{eq:convbound}
\Prob[\widehat X=X]
\le \min\!\left\{1,\frac{C M_\ell}{\sigma_\rho\sqrt K}\right\},
\qquad
\sigma_\rho^2=\frac{\rho^2-1}{12}.
\end{align}
Therefore, for any $\delta\in(0,1)$, exact recovery with probability at least
$1-\delta$ requires
\begin{align}\label{eq:convmargin}
m_\ell\ge\frac12\log_\rho K
+\log_\rho\frac{(1-\delta)\sigma_\rho}{C\rho}
\end{align}
for every layer.
\end{theorem}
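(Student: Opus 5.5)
The plan is to fix a layer $\ell$ and give the decoder strictly more information: the exact values $S_{\ell'}$ for every $\ell'\ne\ell$, the residue $t_\ell$, and all data-independent randomness. This can only increase the success probability of the best decoder. Since the layers are independent under the prior, $X=\rho^{\ell-1}S_\ell+Y$, where $Y$ is known to this augmented decoder and independent of $S_\ell$. Because $\rho^{\ell-1}\neq0$, the map $S_\ell\mapsto X$ is injective given $Y$, so decoding $X$ exactly is equivalent to decoding $S_\ell$ exactly from $t_\ell=S_\ell\bmod M_\ell$; any extra randomness is independent of $S_\ell$ and cannot help. The optimal (MAP) decoder picks, for each residue class $r$, the most likely value of $S_\ell$ in that class. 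Hence
\begin{align*}
\Prob[\widehat X=X]\le\sum_{r=0}^{M_\ell-1}\max_{s\equiv r}\Prob[S_\ell=s]\le M_\ell\max_{s\in\Z}\Prob[S_\ell=s].
\end{align*}

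The main step is an anti-concentration (local limit) bound $\max_s\Prob[S_\ell=s]\le C/(\sigma_\rho\sqrt K)$ for $K\ge K_0$, where $S_\ell$ is a sum of $K$ i.i.d.\ uniform variables on the balanced alphabet with variance $\sigma_\rho^2$. Two routes are available.
\begin{enumerate}[leftmargin=1.4em]
\item \emph{Fourier inversion.} Write $\Prob[S_\ell=s]\le\frac1{2\pi}\int_{-\pi}^{\pi}|\phi(\theta)|^K\,d\theta$, with
\begin{align*}
\phi(\theta)=\frac{\sin(\rho\theta/2)}{\rho\sin(\theta/2)}.
\end{align*}
Split the integral into $|\theta|\le\theta_0$, where $|\phi|\le e^{-c\sigma_\rho^2\theta^2}$ gives a Gaussian integral of order $1/(\sigma_\rho\sqrt K)$, and $|\theta|>\theta_0$, where $|\phi|\le 1-\eta$ because $\rho\ge3$ is odd and the alphabet has unit span. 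That second piece decays exponentially in $K$, so it is absorbed into the constant once $K\ge K_0$.
\item \emph{Unimodality.} Since $S_\ell$ is symmetric and unimodal, with mode mass at most roughly $1/\sqrt{2\pi K\sigma_\rho^2}$ asymptotically by the local CLT, the same bound follows.
\end{enumerate}
The constant $C$ depends only on $\rho$, which is fixed. I expect this uniform local-limit bound to be the only nontrivial step, and I would prove it with route 1. The trivial bound $1$ supplies the $\min$ in \eqref{eq:convbound}.

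For \eqref{eq:convmargin}, requiring $1-\delta\le\Prob[\widehat X=X]\le C\rho^{1+m_\ell}/(\sigma_\rho\sqrt K)$ and taking $\log_\rho$ gives
\begin{align*}
m_\ell\ge\tfrac12\log_\rho K+\log_\rho\frac{(1-\delta)\sigma_\rho}{C\rho}.
\end{align*}
This holds for every $\ell$ because the argument above applied to an arbitrary layer.
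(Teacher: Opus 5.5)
Your proposal is correct and follows the paper's proof essentially step for step: the same genie revealing $\{S_j:j\neq\ell\}$ to reduce decoding $X$ to decoding $S_\ell$, the same MAP bound $\sum_r\max_{s\equiv r}\Prob[S_\ell=s]\le M_\ell\sup_s\Prob[S_\ell=s]$, the same anti-concentration step, and the same algebra for the margin. The only difference is that the paper cites the lattice local limit theorem, using that the balanced digit law is bounded and aperiodic, whereas you prove the needed uniform bound directly by Fourier inversion, which is a self-contained version of the same step.
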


\begin{proof}
Appendix~\ref{app:converse}.
\end{proof}

\begin{corollary}[Rate within the layerwise modular class]
\label{cor:price}
Fix $L$ and $\rho$, and consider a one-symbol-per-client interface exposing
one modular sum per stage. A stage of width $b_r$ has per-coordinate
modulus $\rho^{b_r+m_r}$, so Theorem~\ref{thm:converse} applied to the
coarsened chain of Remark~\ref{rem:schedule} gives
$b_r+m_r\ge\tfrac12\log_\rho K+O(1)$, hence
$m_r\ge\tfrac12\log_\rho K+O(1)$ since $b_r\le L$ is fixed. Under the
uniform-digit prior and constant target success probability, the minimal
per-dimension rate is
$\mathcal R_{\mathrm{mod}}^\star(R)=L\log_2\rho+\tfrac R2\log_2K+O(R)$.
Relative to the same interface with a single stage,
\begin{align}
\mathcal R_{\mathrm{mod}}^\star(R)-\mathcal R_{\mathrm{mod}}^\star(1)
=\frac{R-1}{2}\log_2K+O(R),
\end{align}
which matches \eqref{eq:gap} in order. 
\end{corollary}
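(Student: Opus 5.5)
The corollary has two halves: a lower bound on the rate of any interface in the layerwise modular class, and a matching upper bound from PSA itself. Subtracting the single-stage value then gives the stated gap.

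\emph{Reduction to one modular sum per stage.} Fix a coordinate and stage $r$ with layers $\mathcal G_r$ of width $b_r$. Under the uniform-digit prior, the normalized stage digit is $A_k^r=\sum_{\ell\in\mathcal G_r}\rho^{\ell-\ell_r^-}a_k^\ell$. This digit is uniform on the balanced radix-$\rho^{b_r}$ alphabet, because balanced radix-$\rho$ expansions of length $b_r$ biject onto it. The stage digits are therefore i.i.d.\ uniform balanced digits for the coarsened radix $\rho'=\rho^{b_r}$ of Remark~\ref{rem:schedule}. The stage observation is $S^r\bmod\rho^{b_r+m_r}$, a residue modulo $\rho'\cdot\rho^{m_r}$, and the target $X$ is a fixed integer combination of the stage sums.

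\emph{Lower bound on each margin.} I would invoke Theorem~\ref{thm:converse} on the coarsened chain with radix $\rho'$ and modulus $M_r=\rho^{b_r+m_r}$. This gives, for constant success probability $1-\delta$, the bound $M_r\ge c\,\sigma_{\rho'}\sqrt K$ with $c=(1-\delta)/C_{\rho'}$. Since $b_r\le L$ and $L,\rho$ are fixed, only finitely many radices $\rho'$ can arise, so the constants $C_{\rho'},K_0,\sigma_{\rho'}$ are uniformly bounded. Taking $\log_\rho$ yields $b_r+m_r\ge\tfrac12\log_\rho K+O(1)$, and hence $m_r\ge\tfrac12\log_\rho K+O(1)$.

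\emph{Rate lower bound.} The message for stage $r$ is an element of a quotient of size $\rho^{n_{\mathrm b}(b_r+m_r)}$. Summing over stages as in the proof of Proposition~\ref{prop:rate} gives a per-dimension rate of
\[
\Bigl(L+\sum_r m_r\Bigr)\log_2\rho\ \ge\ L\log_2\rho+\tfrac R2\log_2K+O(R).
\]
For stages wider than roughly $\tfrac12\log_\rho K$, the bound on $m_r$ is vacuous, but since $b_r\le L$ is fixed this cannot happen for large $K$.

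\emph{Achievability.} The matching upper bound comes from Theorem~\ref{thm:margin}(b) applied per stage. Under the uniform-digit model, Assumption~\ref{ass:qu} holds and $V_\ell=\Theta(K)$ in normalized units. Taking DP noise off or at scale $O(\sqrt K)$, the theorem gives $m_r=\tfrac12\log_\rho K+O(1)$ with carry probability at most $\delta$, and exact carry-free sums recover $X$. So $\mathcal R^\star_{\mathrm{mod}}(R)=L\log_2\rho+\tfrac R2\log_2K+O(R)$. Subtracting the $R=1$ instance gives $\tfrac{R-1}{2}\log_2K+O(R)$, which is \eqref{eq:gap}.

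\emph{Main obstacle.} The hard step is justifying the per-stage converse in the multi-stage setting. Theorem~\ref{thm:converse} is stated for decoding $X$ from the entire modular tuple, so I must check that its bound \eqref{eq:convbound} holds for each stage index separately. Its proof presumably conditions on the other layers' sums, which are independent, and applies anti-concentration to the one coarse sum, so it should transfer to the coarsened chain. I would verify this conditioning explicitly, and confirm that the constants stay uniform over the finitely many widths $b_r\le L$.
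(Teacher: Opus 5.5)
Your proposal is correct and follows the paper's route. The paper applies Theorem~\ref{thm:converse} stagewise to the coarsened chain, and the genie conditioning on the independent other-stage sums in Appendix~\ref{app:converse} is exactly what resolves the obstacle you flag. It then uses $b_r\le L$ fixed to absorb $b_r$ and the finitely many radix-$\rho^{b_r}$ constants into $O(1)$, sums the moduli for the rate, and takes achievability from Theorem~\ref{thm:margin}(b) via Proposition~\ref{prop:rate}. You simply make explicit details the paper leaves implicit: the uniform stage-digit law, uniformity of the constants, and the achievability side.
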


Proposition~\ref{prop:secrecy} and Corollary~\ref{cor:price} together isolate
the role of the overhead. Separate pads enforce coarse-only authorization
without recontacting clients; under the uniform-digit prior and within the
layerwise modular class of Definition~\ref{def:sealed}, the associated
arithmetic-payload order is unavoidable. No claim is made for protocols that
jointly encode layers, interact across them, or expose a richer sufficient
statistic.
\vspace{-10pt}
\section{Differential Privacy of Authorized Releases}
\label{sec:priv}

The release gate controls when a stage becomes visible; DP
controls what an authorized release reveals. We use client-level replacement
adjacency. For $q>1$, a mechanism is $(q,\varepsilon)$-R\'enyi differential
private (RDP) if the order-$q$ R\'enyi divergence between its outputs on every
pair of neighboring datasets is at most $\varepsilon$ \cite{renyi}. RDP
composes additively at a fixed order and is preserved by post-processing.
\vspace{-10pt}
\subsection{Layer Sensitivity}
Condition on the public dithers. Let $\Phi_\ell$ map one admissible client
update to its concatenated $d$-dimensional layer-$\ell$ symbol across the
$J=\lceil d/n_{\mathrm b}\rceil$ blocks. Its full-vector replacement
sensitivity is
$\mathcal D_\ell=\sup_{\mathbf u,\mathbf u'}
\|\Phi_\ell(\mathbf u)-\Phi_\ell(\mathbf u')\|_2$, where the supremum is over
two admissible client updates; let $D_\ell$ denote the corresponding
single-block sensitivity. Since $\mathbf x^\ell\in\mathcal C_\ell$,
\begin{align}\label{eq:sens}
D_\ell\le\operatorname{diam}(\mathcal C_\ell)
\le2\rcov{\lat{\ell+1}},
\qquad
\mathcal D_\ell\le\sqrt{J}\,D_\ell.
\end{align}
The $\sqrt J$ factor is a worst-case product bound. Under global $\ell_2$
clipping, the admissible set can make the actual full-vector sensitivity
smaller, in which case that smaller sensitivity should be used.

\begin{remark}[Layer sensitivities do not telescope]\label{rem:notele}
It is tempting to bound $\sum_\ell D_\ell$ by the sensitivity of the
total, since $\sum_\ell\mathbf x^\ell=\mathbf x$. This fails, and for
a concrete reason: an arbitrarily small change of $\mathbf x$ can
alter every layer symbol at once, exactly as incrementing $0999$ to
$1000$ changes every decimal digit. Layer sensitivities must
therefore be summed rather than telescoped, which is why the
composition cost of separately authorized layers does not vanish as
the layers become fine.
\end{remark}

For the integer chain \eqref{eq:intchain}, whenever the admissible
block domain can attain the full balanced digit alphabet,
$D_\ell=h(\rho-1)\rho^{\ell-1}\sqrt{n_{\mathrm b}}$.
If $B<\rpack{\lat\ell}$, then both
$Q_{\lat\ell}(\mathbf x)$ and
$Q_{\lat{\ell+1}}(\mathbf x)$ are zero for every admissible
$\mathbf x$, so the \emph{effective} sensitivity of that layer is
zero. We therefore omit inactive layers from the privacy sum.
\vspace{-10pt}
\subsection{Composition Across Authorized Layers}\label{sec:dp}

For an authorized set, only the mechanisms whose tokens are released enter the
privacy composition.

\begin{theorem}[Progressive RDP accounting]\label{thm:dp}
Condition on all public randomness and the realized active sets. Suppose
that, under client-level replacement adjacency, the authorized modular
mechanism for layer $\ell$ is
$(q,\varepsilon_\ell(q))$-RDP. Then every authorized set
$\mathcal P$ is
\begin{align}\label{eq:rdp}
\left(q,\ \varepsilon_{\mathcal P}(q)\right)\text{-RDP},
\qquad
\varepsilon_{\mathcal P}(q)
\le\sum_{\ell\in\mathcal P}\varepsilon_\ell(q).
\end{align}
Before its token is released, a uniformly sealed layer contributes no
additional leakage to the server transcript in the sense of
Proposition~\ref{prop:secrecy}. Modular reduction, unsealing by a
data-independent token, and affine reconstruction are post-processing.
Under an authorization schedule the mechanisms are the stages rather than
the layers, and the statement applies verbatim to the coarsened chain of
Remark~\ref{rem:schedule}, with $\mathcal P$ a set of stages.
\end{theorem}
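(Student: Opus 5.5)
The plan is to treat the server's entire post-authorization view as an adaptive composition of per-layer mechanisms, with every other element of the transcript arising by post-processing or carrying no data dependence. Condition once and for all on the public dithers, the active sets, and the SecAgg mask structure. For a fixed authorized set $\mathcal P$, write the view as $\mathcal T=(\{\mathbf t_\ell\}_{\ell\in\mathcal P},\{\mathbf U_\ell\}_{\ell\notin\mathcal P},\mathbf T)$, where $\mathbf T$ collects the data-independent material: public randomness, released tokens $\mathbf B_\ell$ for $\ell\in\mathcal P$, and masked uploads, which SecAgg reduces to their aggregates. First I will argue that, given the aggregates, the individual masked uploads are simulatable from data-independent randomness; this is the standard SecAgg guarantee, taken as given in the threat model. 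Once that is in hand, the data-dependent part of the view is a deterministic function of the sealed aggregates $\mathbf u_\ell$ together with the tokens.

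Next, I will handle the unreleased layers. For $\ell\notin\mathcal P$, Proposition~\ref{prop:secrecy}, applied with $V$ equal to the rest of the view, shows that $\mathbf U_\ell$ is uniform and independent of $\mathbf W_\ell$ and of $V$. Its law is therefore identical under neighboring datasets, so appending it to the view leaves every R\'enyi divergence unchanged. This step needs the pads to be independent across layers so that conditioning on the other layers does not break uniformity; the independence across clients, layers, and rounds stated in Section~\ref{sub:gate} supplies this.

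For $\ell\in\mathcal P$, $\mathbf t_\ell=[\mathbf u_\ell-\mathbf B_\ell]\md{\lat{}^{(\ell)}}$ equals $[\sum_k\widetilde{\mathbf x}_k^\ell]\md{\lat{}^{(\ell)}}$ by \eqref{eq:tell}, which is by hypothesis the output of a $(q,\varepsilon_\ell(q))$-RDP mechanism. These mechanisms use independent noise $\mathbf z_k^\ell$ across layers. Given the data and dithers, the layer symbols $\mathbf x_k^\ell$ are deterministic, so the released outputs are conditionally independent across $\ell$. The joint R\'enyi divergence of a product of independent mechanisms on a fixed neighboring pair therefore equals the sum of the per-layer divergences. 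Taking the supremum over neighboring pairs gives \eqref{eq:rdp}, with inequality because the suprema may be attained at different pairs. If the authorization of later layers is chosen adaptively from earlier outputs, I will instead invoke the adaptive composition theorem for RDP \cite{renyi}; it gives the same bound. Reconstruction \eqref{eq:recon} is affine in the $\mathbf t_\ell$ and the public dithers, so it is post-processing, and the data-processing inequality for R\'enyi divergence preserves the bound.

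I expect the main obstacle to be justifying that the hypothesized per-layer RDP holds \emph{conditionally}. The dithers are public and shared across layers, so the layers are correlated through $\mathbf x_k$. The resolution is that, after conditioning on the dithers, the only randomness in layer $\ell$ is $\mathbf z_k^\ell$ (and the pads), and these are independent across layers. The coupling through the data is then handled by composition, which allows arbitrary dependence of each mechanism on the same dataset. For schedules, the identical argument runs with stages in place of layers, using one pad and one noise draw per stage as in Remark~\ref{rem:schedule}.
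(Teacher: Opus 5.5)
Your argument is correct and follows the paper's route: additive RDP composition over the authorized layers, which you verify directly through product-measure additivity once data and dithers are fixed; Proposition~\ref{prop:secrecy} to show each sealed layer is uniform and independent of the rest of the view, so it leaves every R\'enyi divergence unchanged; and post-processing for the SecAgg simulation, the unsealing, and the affine reconstruction. You simply supply more detail than the paper's three-line proof. One point, which the paper shares, should be stated more carefully: when later authorizations depend on earlier outputs, adaptive composition charges every layer that \emph{could} be released, so the guaranteed level is the sum over that maximal set, not over the realized $\mathcal P$.
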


\begin{proof}
RDP composes additively, including for schedules selected from public
information or prior DP outputs, and is preserved by post-processing. A
release decision based directly on additional private information would need
its own privacy analysis. The sealed transcript of every unreleased
layer is conditionally independent of its noisy sum by
Proposition~\ref{prop:secrecy}.
\end{proof}

For a concrete integer-chain instantiation, all blocks are concatenated and
each honest client adds product discrete Gaussian noise as in
\eqref{eq:dgdef}. Only honest clients contribute noise the accounting may rely on, a
colluding or dropped client's share being known to the server or absent;
let $K_{\mathrm{eff},\ell}$ lower-bound their number, set
$\widetilde s_\ell=s_\ell/(h\rho^{\ell-1})\ge1/2$. A sum of discrete
Gaussians is not exactly a discrete Gaussian; the following term
accounts for the residual. Define the convolution correction
\begin{align}\label{eq:tau}
\tau_\ell
=10\sum_{j=1}^{K_{\mathrm{eff},\ell}-1}
\exp\!\left(-2\pi^2\widetilde s_\ell^2\frac{j}{j+1}\right).
\end{align}
After rescaling the layer to $\Z^d$, the distributed-discrete-Gaussian
analysis \cite{ddgauss} supplies an explicit per-layer RDP bound. A
conservative form of its $\ell_2$ branch, sufficient for the comparisons
below, is
\begin{align}\label{eq:ddgrdp}
\varepsilon_\ell(q)
\le
\frac{q\mathcal D_\ell^2}
{2K_{\mathrm{eff},\ell}s_\ell^2}
+q\tau_\ell d.
\end{align}
Replacing \eqref{eq:ddgrdp} with the full minimum-of-branches RDP bound from \cite{ddgauss} gives a sharper numerical accountant; the enlarged
coefficient on the exponentially small correction keeps \eqref{eq:ddgrdp}
independent of concentrated-DP normalization conventions.

For a chosen DP failure probability $\delta\in(0,1)$, reported
$(\varepsilon,\delta)$ values are obtained from
$\varepsilon_{\mathcal P}(q)$ by the standard R\'enyi conversion, minimized
over the order $q>1$, and the noise scales are calibrated by bisection on
that minimized $\varepsilon$ at fixed $\delta$. The same RDP-to-$(\varepsilon,\delta)$ conversion and calibration criterion
are applied to the private mechanisms in the numerical evaluation, with each
mechanism using its stated RDP bound. Thus the reported comparisons are at
matched $(\varepsilon,\delta)$ rather than at matched R\'enyi order.

No carry-margin event is required for privacy. The mechanism always
releases a finite-group value, and any centered unwrapping rule is
post-processing. Carries affect utility, not the validity of the RDP
accounting.

\begin{remark}[General lattices]\label{rem:general}
Equation~\eqref{eq:ddgrdp} uses a product-integer discrete Gaussian.
Extending its constants to a nonproduct lattice such as $A_2$, $D_4$, or $E_8$ requires a corresponding convolution analysis
and a noise scale above the lattice smoothing parameter \cite{smoothing};
self-similarity alone is insufficient.
The geometric carry results, in contrast, apply to every chain satisfying
Assumption~\ref{ass:compatible}.
\end{remark}
\vspace{-20pt}
\subsection{Leading Privacy Cost of Layering}

By Theorem~\ref{thm:dp}, separately authorizable layers compose. To isolate the
leading sensitivity term, ignore the exponentially small $\tau_\ell$
corrections, suppose the same $K_{\mathrm{eff}}$ applies to all active layers,
and fix the nominal per-coordinate noise-scale budget
$V\triangleq\sum_\ell s_\ell^2$. This is an accounting comparison at fixed
$V$; because a discrete Gaussian's actual variance is $\bar s_\ell^2$ rather
than exactly $s_\ell^2$, it should not be read as an exact equal-distortion
statement at finite parameters.

\begin{proposition}[Leading composition cost at full release]
\label{prop:cost}
Let $\mathcal D_\ell$ be the actual full-vector replacement
sensitivity of layer $\ell$. Each layer contributes
$\mathcal D_\ell^2/s_\ell^2$, so the budget should be spent where a client
can move the output furthest. The allocation minimizing the leading term of
\eqref{eq:ddgrdp} is
\begin{align}\label{eq:optalloc}
s_\ell^2
=\frac{V\mathcal D_\ell}{\sum_j\mathcal D_j},
\end{align}
and yields
\begin{align}\label{eq:epslay}
\varepsilon_{\mathrm{lay}}^{\mathrm{lead}}(q)
=\frac{q(\sum_\ell\mathcal D_\ell)^2}
{2K_{\mathrm{eff}}V}.
\end{align}
If a matched flat representation has replacement sensitivity
$\mathcal D_{\mathrm{flat}}$ and the same noise-scale budget, then
\begin{align}\label{eq:ratio}
\frac{\varepsilon_{\mathrm{lay}}^{\mathrm{lead}}}
{\varepsilon_{\mathrm{flat}}^{\mathrm{lead}}}
=\left(\frac{\sum_\ell\mathcal D_\ell}
{\mathcal D_{\mathrm{flat}}}\right)^2.
\end{align}
For one full-range integer block, or a product of such blocks under
coordinate-wise range constraints,
\begin{align}
\mathcal D_\ell&=h(\rho-1)\rho^{\ell-1}\sqrt d,\\
\mathcal D_{\mathrm{flat}}&=h(\rho^L-1)\sqrt d,
\end{align}
so $\sum_\ell\mathcal D_\ell=\mathcal D_{\mathrm{flat}}$ and the
leading ratio is exactly $1$. More generally, if
$\mathcal D_\ell\le c_1\rho^{\ell-1}$ and
$\mathcal D_{\mathrm{flat}}\ge c_2\rho^{L-1}$ with positive
constants independent of $L$, then
\begin{align}\label{eq:generalratio}
\frac{\varepsilon_{\mathrm{lay}}^{\mathrm{lead}}}
{\varepsilon_{\mathrm{flat}}^{\mathrm{lead}}}
\le\left(\frac{c_1}{c_2}\frac{\rho}{\rho-1}\right)^2,
\end{align}
which is independent of $L$.
\end{proposition}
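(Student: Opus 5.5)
The proof splits into three parts. First, minimize the leading term of \eqref{eq:ddgrdp} over the noise allocation, which gives \eqref{eq:optalloc}--\eqref{eq:ratio}. Second, evaluate the integer-chain sensitivities to get the exact ratio $1$. Third, bound a geometric sum to get \eqref{eq:generalratio}.

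\emph{Optimization.} Drop the $\tau_\ell$ terms and use Theorem~\ref{thm:dp}. The leading full-release cost is then
\[
\varepsilon_{\mathrm{lay}}^{\mathrm{lead}}(q)=\frac{q}{2K_{\mathrm{eff}}}\sum_\ell \frac{\mathcal D_\ell^2}{s_\ell^2},
\]
and we minimize it subject to $\sum_\ell s_\ell^2=V$. Cauchy--Schwarz gives
\[
\Bigl(\sum_\ell \mathcal D_\ell\Bigr)^2=\Bigl(\sum_\ell \frac{\mathcal D_\ell}{s_\ell}\,s_\ell\Bigr)^2\le \Bigl(\sum_\ell \frac{\mathcal D_\ell^2}{s_\ell^2}\Bigr)\Bigl(\sum_\ell s_\ell^2\Bigr),
\]
so $\sum_\ell \mathcal D_\ell^2/s_\ell^2\ge(\sum_\ell\mathcal D_\ell)^2/V$. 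Equality holds exactly when $\mathcal D_\ell/s_\ell\propto s_\ell$, that is, when $s_\ell^2\propto\mathcal D_\ell$. Normalizing to the budget gives \eqref{eq:optalloc}, and substituting back gives \eqref{eq:epslay}. Inactive layers have $\mathcal D_\ell=0$ and are excluded, consistent with the earlier remark, so no division by zero arises.

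\emph{Flat comparison.} The flat representation has one mechanism with the whole budget $V$, so its leading cost is $q\mathcal D_{\mathrm{flat}}^2/(2K_{\mathrm{eff}}V)$. Dividing the two leading costs gives \eqref{eq:ratio}.

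\emph{Full-range integer block.} In each coordinate the layer digit ranges over $h\rho^{\ell-1}\{-(\rho-1)/2,\ldots,(\rho-1)/2\}$, so the coordinate diameter is $h(\rho-1)\rho^{\ell-1}$. Under coordinatewise range constraints each of the $d$ coordinates can independently attain its extremes. Hence $\mathcal D_\ell=h(\rho-1)\rho^{\ell-1}\sqrt d$.

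The flat symbol $\mathbf x$ ranges over $h\{-(\rho^L-1)/2,\ldots,(\rho^L-1)/2\}^d$, which gives $\mathcal D_{\mathrm{flat}}=h(\rho^L-1)\sqrt d$. The identity $(\rho-1)\sum_{\ell=1}^L\rho^{\ell-1}=\rho^L-1$ then gives $\sum_\ell\mathcal D_\ell=\mathcal D_{\mathrm{flat}}$, so the ratio is $1$.

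This step is the main subtlety, because the sensitivities must be attained simultaneously across coordinates. For the layers this is easy: take all digits at $+(\rho-1)/2$ versus all at $-(\rho-1)/2$. For the flat symbol it is the full range in every coordinate. I would state this attainment explicitly as the full-range hypothesis, taking it as part of the statement's assumption rather than deriving it from $\ell_2$ clipping.

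\emph{General bound.} Sum the geometric series:
\[
\sum_{\ell=1}^L\mathcal D_\ell\le c_1\frac{\rho^L-1}{\rho-1}<c_1\frac{\rho^{L}}{\rho-1}.
\]
Dividing by $\mathcal D_{\mathrm{flat}}\ge c_2\rho^{L-1}$ gives a ratio of at most $(c_1/c_2)\,\rho/(\rho-1)$. Squaring yields \eqref{eq:generalratio}, which is uniform in $L$.
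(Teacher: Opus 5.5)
Your proposal is correct and follows the paper's proof step for step. Both arguments use Cauchy--Schwarz at fixed $\sum_\ell s_\ell^2=V$, with equality iff $s_\ell^2\propto\mathcal D_\ell$; the identity $(\rho-1)\sum_{\ell=1}^L\rho^{\ell-1}=\rho^L-1$ for the full product domain; and the bound $\sum_\ell\mathcal D_\ell\le c_1\rho^L/(\rho-1)$ divided by $c_2\rho^{L-1}$. Your remarks on attaining the coordinate extremes simultaneously under the full-range hypothesis, and on excluding inactive layers with $\mathcal D_\ell=0$, make explicit details the paper leaves implicit; they are not a different route.
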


\begin{proof}
Minimizing $\sum_\ell\mathcal D_\ell^2/s_\ell^2$ at fixed
$\sum_\ell s_\ell^2=V$ is Cauchy--Schwarz: $(\sum_\ell\mathcal D_\ell)^2
\le(\sum_\ell\mathcal D_\ell^2/s_\ell^2)\sum_\ell s_\ell^2$, with equality
if and only if $s_\ell^2\propto\mathcal D_\ell$, giving
\eqref{eq:optalloc}--\eqref{eq:ratio}. For the full product domain,
$\sum_\ell\mathcal D_\ell=h(\rho-1)\sqrt d\sum_\ell\rho^{\ell-1}
=h(\rho^L-1)\sqrt d=\mathcal D_{\mathrm{flat}}$. In general geometric
growth gives $\sum_\ell\mathcal D_\ell\le c_1\rho^L/(\rho-1)$ while
$\mathcal D_{\mathrm{flat}}\ge c_2\rho^{L-1}$; dividing and squaring yields
\eqref{eq:generalratio}.
\end{proof}

Layering is therefore not automatically a privacy improvement: in the
tightly matched balanced integer case the leading \emph{composition upper
bound} is neutral under this budget comparison. The constraint
$\widetilde s_\ell\ge1/2$, the convolution corrections, and actual output
variances can shift the finite-parameter optimum. The caveat concerns the admissible domain, not the adjacency relation, which is
client-level replacement in both cases. The neutral case assumes a product box
that fills the digit alphabet coordinatewise. Under global $\ell_2$ clipping,
the admissible set is a ball and the layer sensitivities need not sum to the
flat sensitivity. Proposition~\ref{prop:cost} therefore identifies a regime in
which the leading composition bound is neutral; it does not claim neutrality
for general clipped workloads.
\vspace{-10pt}
\section{Distortion Analysis}
\label{sec:mse}

Fix a finest authorized layer $\ell_0\in\{1,\ldots,L\}$. Under a grouped
authorization schedule, $\ell_0$ is restricted to the minimum layer index of
the finest authorized stage. For a quantized block
$\mathbf x\in\lat1\cap\vor{L+1}$, define the refinement withheld below
$\ell_0$ by
\begin{align}\label{eq:resdef}
\mathbf r^{(\ell_0)}
\triangleq\mathbf x-Q_{\lat{\ell_0}}(\mathbf x)
\in\lat1\cap\vor{\ell_0},
\end{align}
and, under the actual data-and-dither law, let
$T_{\ell_0}\triangleq\E\|\mathbf r^{(\ell_0)}\|^2$. Always
$T_{\ell_0}\le\rcov{\lat{\ell_0}}^2$. For the integer chain, if this residue
is additionally uniform on the balanced set $\lat1\cap\vor{\ell_0}$, then
\begin{align}\label{eq:resenergy}
T_{\ell_0}
=n_{\mathrm b}\sigma^2(\lat{\ell_0})
\bigl(1-\rho^{-2(\ell_0-1)}\bigr),
\end{align}
with $T_1=0$.

For client $k$, define the one-block reconstruction error before DP noise as
\begin{align}
\mathbf E_k^{(\ell_0)}
&=Q_{\lat{\ell_0}}(\mathbf x_k)-\mathbf d_k-\mathbf v_k\\
&=\mathbf e_k-\mathbf r_k^{(\ell_0)}.
\end{align}
By \eqref{eq:telescope},
$\mathbf r_k^{(\ell_0)}=\sum_{\ell<\ell_0}\mathbf x_k^\ell$, so the residue
is exactly the withheld refinement.

\begin{theorem}[End-to-end MSE]\label{thm:mse}
Let $\widetilde{\mathbf s}_{\Delta,\ell_0}$ be the ideal estimator obtained
from \eqref{eq:recon} by replacing every authorized modular residue by its
true noisy lattice sum. By Lemma~\ref{lem:carry}, this ideal estimator equals
the actual estimator on the zero-carry event. Under the concentration rule of
Theorem~\ref{thm:margin}(b), that event has probability at least
$1-\delta_w-\delta_z$ when the rule is applied to all released layers; the
moments below are moments of the ideal estimator, not unconditional moments of
the modular estimator outside that event. Suppose client data-and-dither pairs
are independent across clients and, for $\ell_0>1$, the skipped refinement of
each client has zero mean, so $\E[\mathbf E_k^{(\ell_0)}]=\mathbf0$. At full
resolution the residue is zero and this condition follows from subtractive
dithering alone. Together with the independent zero-mean DP noises specified in
the construction, these assumptions make the cross terms vanish. For one
block,
\begin{align}\label{eq:msedec}
\E\|\widetilde{\mathbf s}_{\Delta,\ell_0}-\mathbf s_\Delta\|^2
=\frac1{a^2}\Bigg[&
\sum_{k=1}^K\E\|\mathbf E_k^{(\ell_0)}\|^2\notag\\[-0.4ex]
&+K\sum_{r:\,\mathcal G_r\ \mathrm{authorized}}
 n_{\mathrm b}\bar s_r^2\Bigg].
\end{align}
The per-client error term behaves as follows under truncation:
\begin{enumerate}[leftmargin=1.4em,label=(\alph*)]
\item \emph{Full resolution.}
For $\ell_0=1$,
$\E\|\mathbf E_k^{(1)}\|^2=n_{\mathrm b}\sigma^2(\lat1)$
exactly, without a distributional assumption on the update.
\item \emph{Any truncation.} For identically distributed client blocks, with $T_{\ell_0}$ defined above,
\begin{align}\label{eq:msebounds}
&\left(\sqrt{T_{\ell_0}}-
\sqrt{n_{\mathrm b}\sigma^2(\lat1)}\right)^2
\le\E\|\mathbf E_k^{(\ell_0)}\|^2\nonumber\\
&\le
\left(\sqrt{T_{\ell_0}}+
\sqrt{n_{\mathrm b}\sigma^2(\lat1)}\right)^2.
\end{align}
The bounds are two-sided because $\mathbf e_k$ and
$\mathbf r_k^{(\ell_0)}$ are functions of the same data and dither pair,
hence correlated in general; only their individual second moments are
known, so their inner product is bounded rather than evaluated. They are
distribution-free once the actual $T_{\ell_0}$ is used, and apply to
non-identically distributed clients with
$T_{k,\ell_0}=\E\|\mathbf r_k^{(\ell_0)}\|^2$ per summand. Always
$T_{\ell_0}\le\rcov{\lat{\ell_0}}^2$.
\item \emph{Scaling under uniform residues.} For the integer chain
under the additional uniform-residue condition, which leads to
\eqref{eq:resenergy}, the two sides of \eqref{eq:msebounds} are
$\Theta(\rho^{2(\ell_0-1)})$ for $\ell_0\ge2$ (with fixed odd
$\rho$). Without that condition, only the universal upper scaling
$O(\rho^{2(\ell_0-1)})$ follows from the support bound; a matching
lower scaling need not hold.
\end{enumerate}
\end{theorem}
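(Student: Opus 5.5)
I would first write the ideal estimator in closed form and then handle the MSE decomposition and the three truncation statements in turn.

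\emph{Closed form of the error.} Replace each $\mathbf t_\ell$ in \eqref{eq:recon} by $\mathbf S_\ell+\mathbf Z_\ell$ and use \eqref{eq:partial} per client. This gives
\[
a(\widetilde{\mathbf s}_{\Delta,\ell_0}-K\bm\mu_0)=\sum_k\bigl(Q_{\lat{\ell_0}}(\mathbf x_k)-\mathbf d_k\bigr)+\sum_{\ell\ge\ell_0}\mathbf Z_\ell .
\]
Subtracting $a(\mathbf s_\Delta-K\bm\mu_0)=\sum_k\mathbf v_k$ yields the error
\[
a^{-1}\Bigl(\sum_k\mathbf E_k^{(\ell_0)}+\sum_{\ell\ge\ell_0}\mathbf Z_\ell\Bigr).
\]
Under a stage schedule, the same computation holds with stage noises $\mathbf z_k^r$; this is why $\ell_0$ must be a stage minimum. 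The equality with the actual estimator on the zero-carry event is immediate from Lemma~\ref{lem:carry}. The probability statement is the union bound already recorded in Theorem~\ref{thm:margin}(b).

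\emph{Expanding the square.} The cross terms between distinct clients vanish: the clients are independent and $\E\mathbf E_k^{(\ell_0)}=\mathbf0$. The cross terms between data errors and noise vanish because the noise is independent of data and dithers and has zero mean. The cross terms between distinct stages and clients vanish by independence of the noise. Each stage noise contributes $K n_{\mathrm b}\bar s_r^2$. This gives \eqref{eq:msedec}.

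\emph{Part (a).} For $\ell_0=1$ the residue is zero, so $\mathbf E_k^{(1)}=\mathbf e_k$. The subtractive-dither property then gives $n_{\mathrm b}\sigma^2(\lat1)$ exactly. The same fact supplies the zero-mean condition at full resolution.

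\emph{Part (b).} Since $\mathbf E_k=\mathbf e_k-\mathbf r_k$, I apply Minkowski's inequality in $L^2$ to get the upper bound $(\|\mathbf e_k\|_{L^2}+\|\mathbf r_k\|_{L^2})^2$. The reverse triangle inequality gives the lower bound. Here $\|\mathbf e_k\|_{L^2}^2=n_{\mathrm b}\sigma^2(\lat1)$ and $\|\mathbf r_k\|_{L^2}^2=T_{\ell_0}$. The bound $T_{\ell_0}\le\rcov{\lat{\ell_0}}^2$ holds because $\mathbf r^{(\ell_0)}\in\vor{\ell_0}$.

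\emph{Part (c).} This is the step that needs care. For the integer chain, $\sigma^2(\lat{\ell_0})=h^2\rho^{2(\ell_0-1)}/12$. By \eqref{eq:resenergy}, $T_{\ell_0}=n_{\mathrm b}h^2\rho^{2(\ell_0-1)}(1-\rho^{-2(\ell_0-1)})/12$. For $\ell_0\ge2$ and $\rho\ge3$, the factor $1-\rho^{-2(\ell_0-1)}\ge 8/9$, so $T_{\ell_0}$ is $\Theta(\rho^{2(\ell_0-1)})$.

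The upper side of \eqref{eq:msebounds} is then clearly $\Theta(\rho^{2(\ell_0-1)})$. For the lower side I must show that $\sqrt{T_{\ell_0}}$ exceeds $\sqrt{n_{\mathrm b}\sigma^2(\lat1)}=\sqrt{n_{\mathrm b}}\,h/\sqrt{12}$ by a constant fraction. The ratio of the two square roots is $\sqrt{\rho^{2(\ell_0-1)}-1}\ge\sqrt{8}$. Hence the difference is at least $(1-1/\sqrt8)\sqrt{T_{\ell_0}}$, which gives the $\Theta$ lower bound.

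Without the uniform-residue condition, only $T_{\ell_0}\le\rcov{\lat{\ell_0}}^2=n_{\mathrm b}h^2\rho^{2(\ell_0-1)}/4$ is available. This yields the $O(\rho^{2(\ell_0-1)})$ upper scaling. To show that no matching lower scaling holds in general, I would exhibit a degenerate data law where the residue vanishes, for example updates concentrated on $\lat{\ell_0}$ points.
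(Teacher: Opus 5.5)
Your proposal is correct and follows the paper's proof: the same closed-form error after unwrapping each authorized stage to its true noisy sum, the same cross-term cancellations, and a part (b) in which your Minkowski and reverse-triangle step is just the paper's expansion $\E\|\mathbf e_k-\mathbf r_k^{(\ell_0)}\|^2=n_{\mathrm b}\sigma^2(\lat1)+T_{\ell_0}-2\E\langle\mathbf e_k,\mathbf r_k^{(\ell_0)}\rangle$ with Cauchy--Schwarz, written differently. In part (c) you are more explicit than the paper: you check via $\sqrt{T_{\ell_0}}/\sqrt{n_{\mathrm b}\sigma^2(\lat1)}=\sqrt{\rho^{2(\ell_0-1)}-1}\ge\sqrt8$ that the lower bound does not collapse, and you give a degenerate law with $T_{\ell_0}=0$ showing that no matching lower scaling holds in general, whereas the paper only asserts both points.
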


\begin{proof}
Appendix~\ref{app:mse}.
\end{proof}

Thus $\lat1$ fixes the full-resolution quantization floor, while coarser
releases add a truncation term whose geometric scaling requires the stated
uniform-residue condition. DP noise contributes separately through every
authorized stage.
\vspace{-5pt}
\section{Numerical Evaluation}
\label{sec:sims}

The evaluation separates claims that hold for the code alone from claims that
depend on the data. Exact finite-alphabet computation settles the carry margins
and the digit-compatibility condition; controlled sources settle the distortion
bounds and the privacy accounting; federated experiments on MNIST and CIFAR-10
settle the digit model and the one-upload trade-off against matched baselines.
\vspace{-10pt}
\subsection{Carries, Margins, and Digit Compatibility}
\label{sub:exact}

\begin{figure*}[t]
\centering
\includegraphics[width=.7\textwidth]{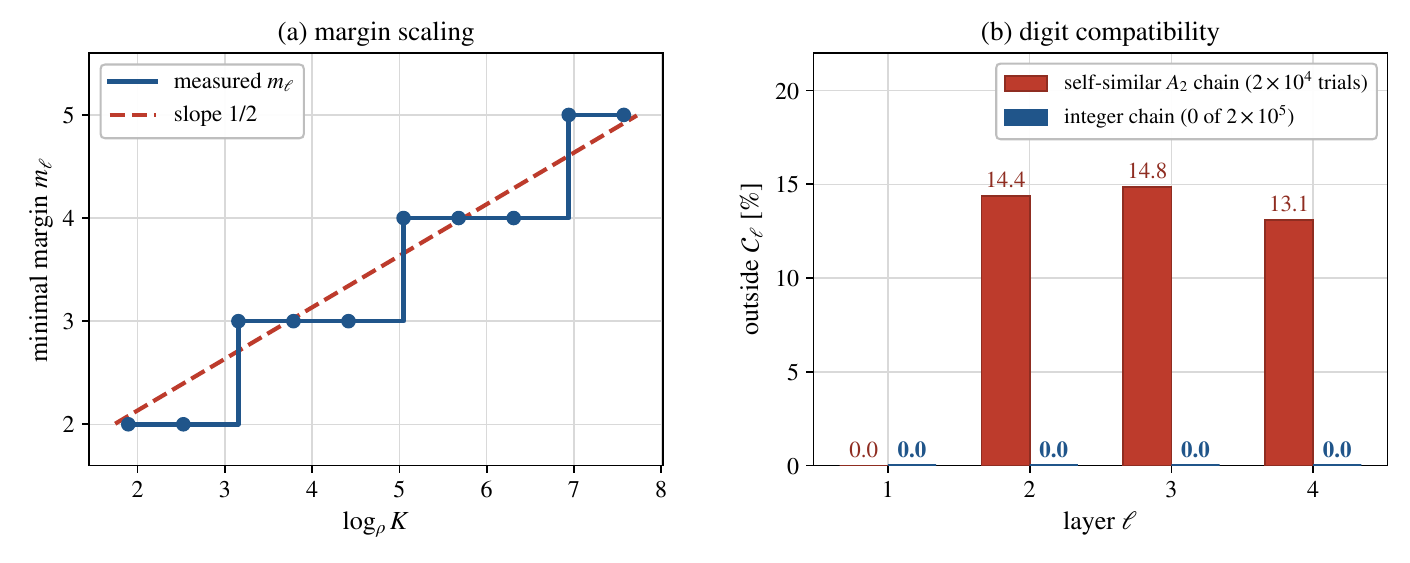}
\vspace{-10pt}
\caption{Exact finite-alphabet validation, $\rho=3$, $L=4$. (a) Smallest
feasible per-layer margin at carry probability at most $10^{-3}$ against
$\log_\rho K$; the staircase reflects integrality of $m_\ell$, and a linear
fit to it has slope $0.548$ against the predicted $1/2$. (b) Rate at
which $Q_{\lat\ell}-Q_{\lat{\ell+1}}$ leaves the centered alphabet
$\mathcal C_\ell$, over $2\times10^5$ random representable points for the
integer chain and $2\times10^4$ for the $A_2$ chain: identically zero for the
odd-radix integer chain (annotated zeros), nonzero from layer~$2$ for a merely
self-similar $A_2$ chain.}
\label{fig:exact}
\end{figure*}

The scalar chain uses $\rho=3$, $L=4$ and client digits i.i.d.\ uniform on
$\{-1,0,1\}$, matching the prior of Theorem~\ref{thm:converse}; digit-sum
distributions are computed by exact polynomial convolution, so the carry
probabilities carry no Monte Carlo error. Fig.~\ref{fig:exact}(a) shows the smallest feasible margin. It grows from $2$ at
$K=8$ to $5$ at $K=4096$ in integer steps, and regressing it on $\log_3K$ gives
slope $0.548$ against the predicted $1/2$. Theorem~\ref{thm:margin}(b) asserts
$\tfrac12\log_\rho K+O(1)$, so it is the slope and not the value that the
measurement tests: the margins lie about $1.1$ steps above
$\tfrac12\log_\rho K$, and the reference line is drawn with slope $1/2$ through
the data for that reason. Realized carry probabilities never exceed
$9.75\times10^{-6}$. Because the rate premium of Proposition~\ref{prop:rate} is
$(L-1)m_\ell\log_2\rho$, it is this staircase up to a constant and is not
plotted separately; numerically it is $19.02$ bit per dimension at $K=1000$
against the asymptotic $\tfrac{L-1}{2}\log_2K=15.00$, the offset being
integrality rather than a failure of the prediction.

Fig.~\ref{fig:exact}(b) tests Assumption~\ref{ass:compatible} directly. Over
$2\times10^5$ random representable points the integer chain satisfies
telescoping, alphabet membership and the quantizer-difference form of
\eqref{eq:telescope} without a violation. On a self-similar hexagonal $A_2$
chain the telescoping identity still holds, being algebraic, but membership
fails from the second layer onward, at $14.4\%$, $14.8\%$ and $13.1\%$ of
$2\times10^4$ trials. Self-similarity therefore does not deliver digit compatibility.
\vspace{-10pt}
\subsection{When Layering Is Free, and When It Is Not}
\label{sub:priv}

\begin{figure*}[t]
\centering
\includegraphics[width=.7\textwidth]{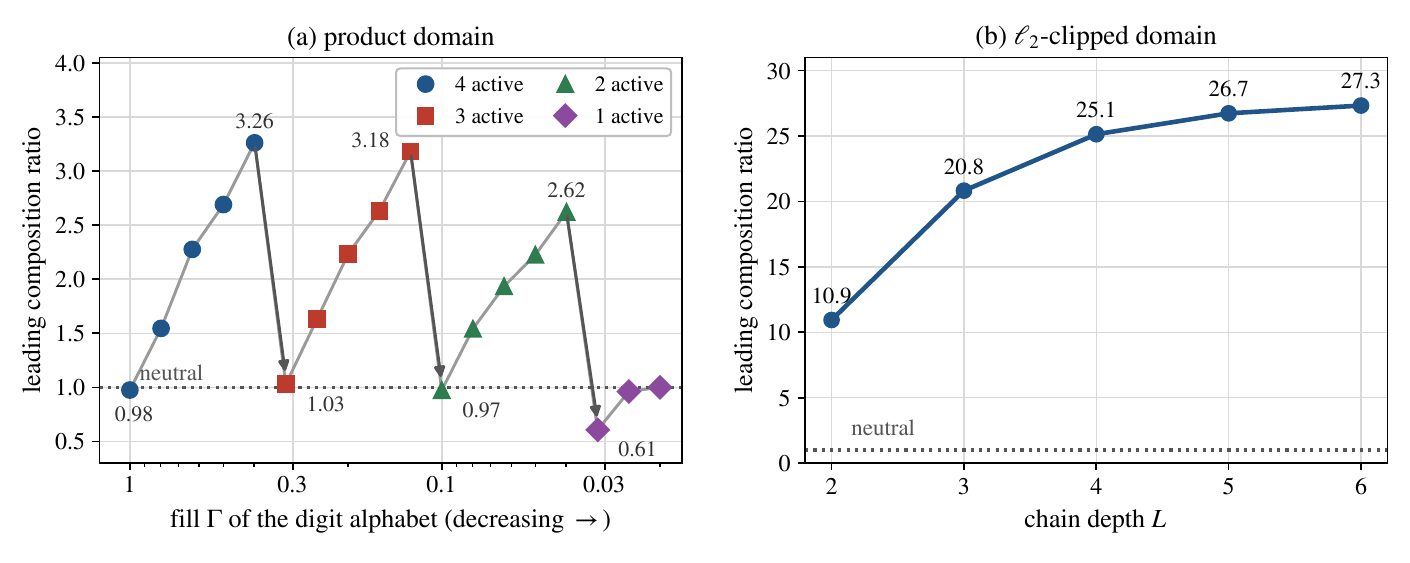}
\vspace{-10pt}
\caption{Leading composition ratio
$\varepsilon^{\rm lead}_{\rm lay}/\varepsilon^{\rm lead}_{\rm flat}$ at equal
noise budget. (a) Product domain as the fill $\Gamma$ of the digit alphabet is
reduced; markers give the number of layers still active, and arrows mark the
transitions at which a layer empties. (b) Globally $\ell_2$-clipped domain
against chain depth, where no layer ever becomes inactive.}
\label{fig:privacy}
\end{figure*}

Proposition~\ref{prop:cost} predicts neutrality only when the admissible domain
fills the digit alphabet in every coordinate. Fig.~\ref{fig:privacy}(a) measures
the leading ratio as that condition is relaxed, and the behaviour is not
monotone. The ratio is $0.975$ at full fill, confirming the predicted
neutrality, then rises to $3.26$ as the fill drops to $0.40$ with all four
layers active. At that point the coarsest layer becomes inactive, its
sensitivity vanishes, it leaves the privacy sum, and the ratio falls back to
$1.03$. The pattern repeats twice more, at $3.18\to0.97$ and $2.62\to0.61$,
until a single layer remains and the ratio is trivially $1$ because one stage
is the flat mechanism.

The penalty of layering is therefore governed not by how restricted the domain
is but by how poorly the \emph{active} layers are filled. A chain whose depth is
matched to the data sits near the bottom of a tooth rather than the top, which
is a design lever the analysis does not expose and which the occupancy
measurements below make actionable.

The globally $\ell_2$-clipped domain assumed by \cite{ddgauss,skellam} is a
different regime, Fig.~\ref{fig:privacy}(b). A ball never fills the coarse
alphabets, no layer becomes inactive, and the ratio grows with depth to $25.1$
at $L=4$, saturating near $27$. This factor, not the rate premium, is the
dominant practical cost of layering under the clipping the standard baselines
use. The caveat concerns the admissible domain, not the adjacency relation,
which is client-level replacement throughout.
\vspace{-10pt}
\subsection{Distortion and the Two Error Sources}
\label{sub:mse}

\begin{figure*}[t]
\centering
\includegraphics[width=.7\textwidth]{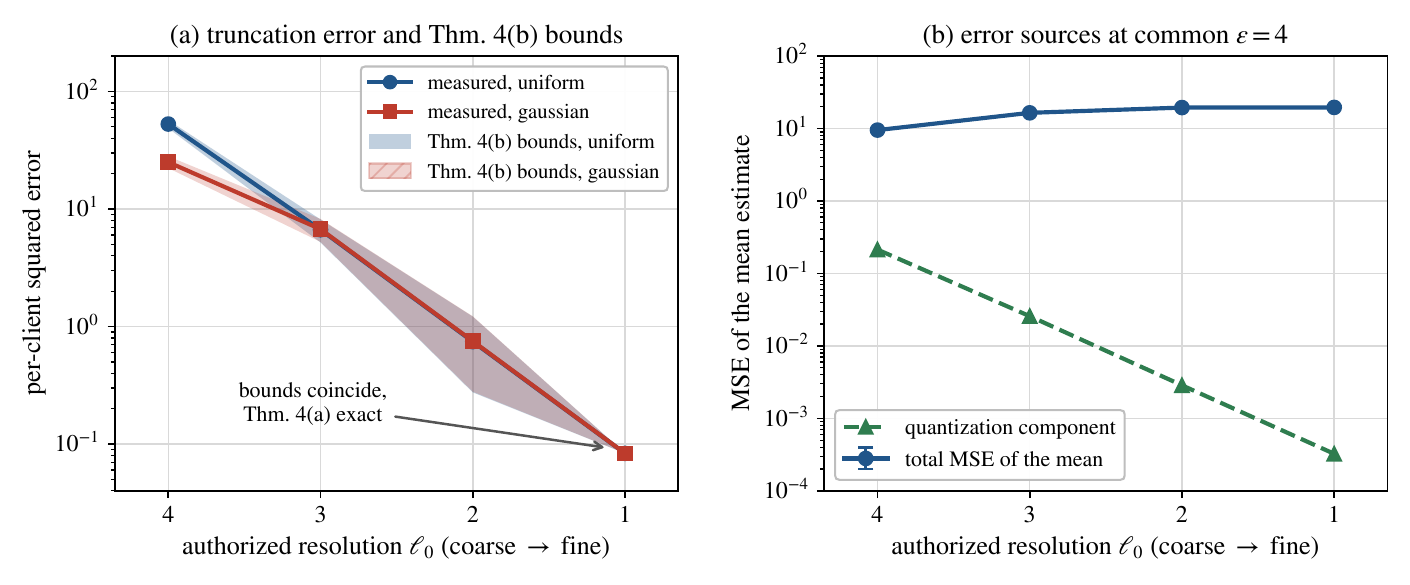}
\vspace{-10pt}
\caption{(a) Per-client truncation error against the two-sided bounds of
Theorem~\ref{thm:mse}(b), drawn as shaded bands for $\ell_0\ge2$ where they
have width; at $\ell_0=1$ the bounds coincide and Theorem~\ref{thm:mse}(a) is
exact. (b) Mean-square error of the mean estimate with every depth recalibrated
to $\varepsilon=4$, separated into its total and its quantization component.
Bars are $95\%$ confidence intervals; where not visible they are smaller than
the markers.}
\label{fig:mse}
\end{figure*}

Fig.~\ref{fig:mse}(a) places the measured per-client error against
Theorem~\ref{thm:mse}(b). Every measurement lies inside the bounds and the error
falls by close to $\rho^2=9$ per refinement level, from $5.28\times10^{1}$
(uniform) and $2.50\times10^{1}$ (Gaussian) at $\ell_0=4$ to
$8.3\times10^{-2}$ at $\ell_0=1$. That the bounds hold for a
Gaussian source, which violates the uniform-residue condition, is the point:
they are distribution-free. The closed form for $T_{\ell_0}$ needs that
condition and deviates from the measured second moment by $13.0\%$ (uniform)
and $58.6\%$ (Gaussian) at the coarsest level, so accuracy statements should
use the measured $T$ or the bounds.

Fig.~\ref{fig:mse}(b) recalibrates every depth to a common $\varepsilon=4$ and
separates the two contributions. The quantization component falls monotonically
by nearly three orders of magnitude, from $0.214$ to $3\times10^{-4}$, exactly
as (a) predicts. The total error moves the other way, rising from $9.52$ at
$\ell_0=4$ to $19.48$ at $\ell_0=2$ and then saturating; the final step to
$19.58$ lies within the $\pm0.61$ confidence interval, so the last refinement
neither helps nor measurably hurts. Each additional authorized stage contributes
an independent noise draw whose variance exceeds the truncation error it
removes, until the truncation term is small enough that neither dominates.
Refinement improves fidelity only once the privacy budget is loosened. This is a
property of the independent-stage-noise construction, not a lower bound on
progressive release. No carry occurred at the provisioned margins in any run.

\subsection{Federated Evaluation}\label{sec:fedeval}
\label{sub:setup}

\emph{Setup.} MNIST and CIFAR-10 are used with their standard splits, processed
by the unsupervised convolutional feature pipeline of Coates \emph{et
al.}~\cite{coates}: whitened random patches, a $k$-means dictionary, triangle
activation, spatial pooling and PCA. MNIST uses $256$ centroids, $2\times2$
pooling and PCA-$128$; CIFAR-10 uses $512$ centroids, $3\times3$ pooling and
PCA-$384$. The model is multinomial logistic regression, so $d=1290$ and
$d=3850$, and the chain is $\rho=3$, $L=4$. The scale is deliberate: the
claims under test concern the aggregation mechanism rather than the learner,
and a convex model at full participation isolates the privacy accounting from
optimization effects while keeping every run reproducible on CPU. Training is FedSGD with server
momentum $0.9$ \cite{mcmahan}, $K=1000$ clients at full participation, non-IID
Dirichlet($0.5$) partitioning, $150$ rounds and per-client clipping at $C=1$.
Reported $\varepsilon$ values are run totals: the per-round mechanism is
composed $T=150$ times in the R\'enyi accountant before conversion, with no
subsampling amplification since participation is full. Results are means over
three seeds.

\emph{Baselines.} The non-private arm is plain FedSGD \cite{mcmahan}; the
compression-only arm is single-layer subtractive-dithered lattice quantization
after UVeQFed \cite{uveqfed} with no DP noise, isolating the cost of the
representation. The private arms are the distributed discrete Gaussian
\cite{ddgauss} and Skellam \cite{skellam} mechanisms under SecAgg
\cite{bonawitz}, Skellam credited with its leading R\'enyi term only, which
favors that baseline. All private arms share the lattice granularity $h$, so
resolution is matched and every accuracy difference is attributable to the
privacy accounting rather than to the lattice code.

\emph{Matched functionality.} The only baseline serving $R$ independently
authorized resolutions is repeated flat SecAgg, run once per resolution at the
granularity that resolution requires. An execution serving $\ell_0$ transmits
$(L-\ell_0+1+m)\log_2\rho$ bit per dimension rather than $(L+m)\log_2\rho$, and
the $R$ mechanisms compose, so the baseline delivers the same $R$ numbers only
by re-executing and recontacting the clients. We sweep $R\in\{1,2,4\}$ with
stages $\{4,3,2,1\}$, $\{4,3\},\{2,1\}$, and one layer per stage, at equal
$(\varepsilon,\delta)$ with $\delta=10^{-6}$.

\emph{The digit model.} Assumption~\ref{ass:qu} is the one modeling assumption
in the achievability argument, and it concerns the data. On real clipped
gradients the largest normalized digit bias
$|\mathbb{E}[a^\ell]|/\mathrm{sd}(a^\ell)$ is $0.031$ on MNIST and $0.023$ on
CIFAR-10 across all four layers, so its zero-mean part is accurate for
gradient-type updates around a shared reference. It does not test independence,
since clients in a round share the current model; the worst-case rule of
Theorem~\ref{thm:margin}(a) needs neither part, at a margin growing as $K$
rather than $\sqrt K$. Layer occupancy, the fraction of coordinates with a
nonzero digit, is $(0.165,0.015,1.6\times10^{-3},1.9\times10^{-4})$ on MNIST and
$(0.124,0.003,3.9\times10^{-4},6.8\times10^{-5})$ on CIFAR-10. On CIFAR-10 the
finest provisioned layer is essentially unoccupied, so three of the four layers
carry the information; on MNIST all four are occupied, the finest only barely.
By Fig.~\ref{fig:privacy}(a) an unoccupied layer is not merely wasted rate but
also moves the chain up a tooth of the privacy cost: depth should be provisioned
from measured occupancy, not from the nominal dynamic range.

\begin{table}[t]
\centering
\caption{Matched functionality at $\delta=10^{-6}$: one PSA upload serving $R$
authorized resolutions versus $R$ repeated flat executions. Accuracy is the mean
over three seeds; payload is bit per dimension at $K=1000$.}
\label{tab:matched}
\small
\setlength{\tabcolsep}{3.5pt}
\begin{tabular}{llcccc}
\hline
 & & \multicolumn{2}{c}{MNIST} & \multicolumn{2}{c}{CIFAR-10} \\
$\varepsilon$ & $R$ & PSA & flat $\times R$ & PSA & flat $\times R$ \\
\hline
$6$  & 1 & 0.771 & 0.771 & 0.274 & 0.274 \\
     & 2 & 0.650 & 0.718 & 0.195 & 0.227 \\
     & 4 & 0.492 & 0.626 & 0.154 & 0.206 \\
\hline
$30$ & 1 & 0.910 & 0.910 & 0.429 & 0.429 \\
     & 2 & 0.854 & 0.883 & 0.354 & 0.389 \\
     & 4 & 0.709 & 0.839 & 0.248 & 0.365 \\
\hline
\multicolumn{2}{l}{payload $R=1$} & 12.68 & 12.68 & 12.68 & 12.68 \\
\multicolumn{2}{l}{payload $R=2$} & 19.02 & 22.19 & 19.02 & 22.19 \\
\multicolumn{2}{l}{payload $R=4$} & 31.70 & 41.21 & 31.70 & 41.21 \\
\hline
\end{tabular}
\end{table}

\begin{figure}[t]
\centering
\includegraphics[width=0.88\columnwidth]{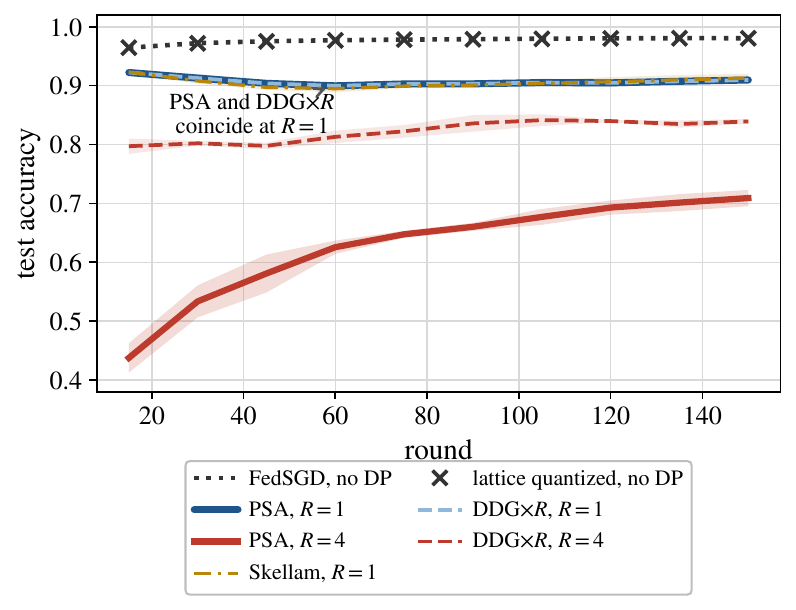}
\vspace{-10pt}
\caption{MNIST federated training at matched $\varepsilon=30$,
$\delta=10^{-6}$, three seeds (shaded: one standard deviation, below line
width except at $R=4$). Solid curves are PSA, dashed the distributed
discrete Gaussian re-executed once per resolution (DDG$\times R$), dash-dotted
the Skellam mechanism, which serves a single release; the
extremes $R=1$ and $R=4$ are shown, with $R=2$ in
Table~\ref{tab:matched}. Two coincidences are results rather than plotting
artifacts: the lattice-quantized arm tracks the non-private reference, and at
$R=1$ the dashed curve lies on the solid one at every logged round.}
\label{fig:fl}
\end{figure}

\emph{Results.} Table~\ref{tab:matched} reports the comparison. At $R=1$ PSA and
the flat mechanism agree to three decimals at every privacy level on both
datasets, and the payloads coincide at $12.68$ bit per dimension: with one
authorization stage PSA provisions one modulus, exactly as the flat mechanism
does. This is the falsification test of Section~\ref{sub:setup} and it passes;
any advantage here would have indicated an accounting error rather than a
property of the scheme. Fig.~\ref{fig:fl} shows that the $R=1$ curves agree at
every logged round, not merely at convergence, so the neutrality is structural.
The Skellam arm, which serves a single release, tracks the $R=1$ pair at matched
budgets, reaching $0.779$, $0.847$ and $0.913$ on MNIST and $0.229$, $0.352$ and
$0.430$ on CIFAR-10 at $\varepsilon=6,12,30$; at $\varepsilon=6$ its
unprovisioned modulus wraps ($4.7\times10^{3}$ carry events on MNIST,
$5.3\times10^{4}$ on CIFAR-10), which costs it four accuracy points on
CIFAR-10, whereas PSA's provisioned margins admit no carry in any run.

For $R>1$ the two directions separate. PSA's finest consumer reconstructs the
sum of all authorized stages and accumulates the noise of every one, whereas a
consumer of the repeated flat pipeline reads a single execution and sees one
noise draw; stage sensitivities do not shrink proportionally, because a small
change of the input can alter every layer symbol
(Remark~\ref{rem:notele}). At $\varepsilon=30$ PSA reaches $0.854$ against
$0.883$ at $R=2$ and $0.709$ against $0.839$ at $R=4$. Communication runs the
other way: PSA uploads once with one modulus per stage, giving $12.68$, $19.02$
and $31.70$ bit per dimension at $R=1,2,4$, against $12.68$, $22.19$ and $41.21$
for the repeated pipeline, so PSA is cheaper for every $R>1$, at $0.86\times$
the payload for $R=2$ and $0.77\times$ for $R=4$. Privacy composition runs the
same way: serving four resolutions costs $\varepsilon=2.00$ for one PSA
execution against $\varepsilon=4.29$ for four flat executions, a factor $2.14$,
with the flat totals $2.00$, $2.92$, $3.65$ and $4.29$ across $R=1,2,3,4$ while
the PSA total stays at $2.00$. The asymmetry is structural: PSA is charged for
the layers it reveals, and a coarser resolution is a prefix of layers already
revealed, so serving it costs nothing further, whereas each flat resolution is a
separate execution composed in its own right.

The representation itself is nearly free. The compression-only arm attains
$0.9804\pm0.0003$ against $0.9806\pm0.0003$ for the uncompressed non-private
reference on MNIST, and $0.6073\pm0.0005$ against $0.6086\pm0.0010$ on
CIFAR-10, so every accuracy difference among the private arms is attributable
to the privacy accounting rather than to the lattice code.

The absolute accuracies identify the operating region in which progressive
release is inexpensive. The configuration is deliberately conservative: full
participation over $150$ rounds admits no amplification by subsampling, so every
arm, including the flat baseline at $0.771$ for $\varepsilon=6$, sits well below
the non-private reference of $0.981$; amplification or fewer rounds would raise
all arms together. Within this setting $R=1$ is free by construction and $R=2$
costs between $3$ and $7$ accuracy points on MNIST, narrowing as the budget
loosens. The cost of the finest release grows with the number of independent
authorizations, so serving four is best matched to looser budgets. Taken
together with the payload and composition figures, the case for PSA rests on the
enforcement guarantee it provides, not on a drop-in accuracy or bandwidth
improvement.
\vspace{-10pt}
\subsection{The Rotated Domain: Closing the Layering Penalty}
\label{sec:rotdomain}

\begin{figure}
\centering
\includegraphics[width=0.7\columnwidth]{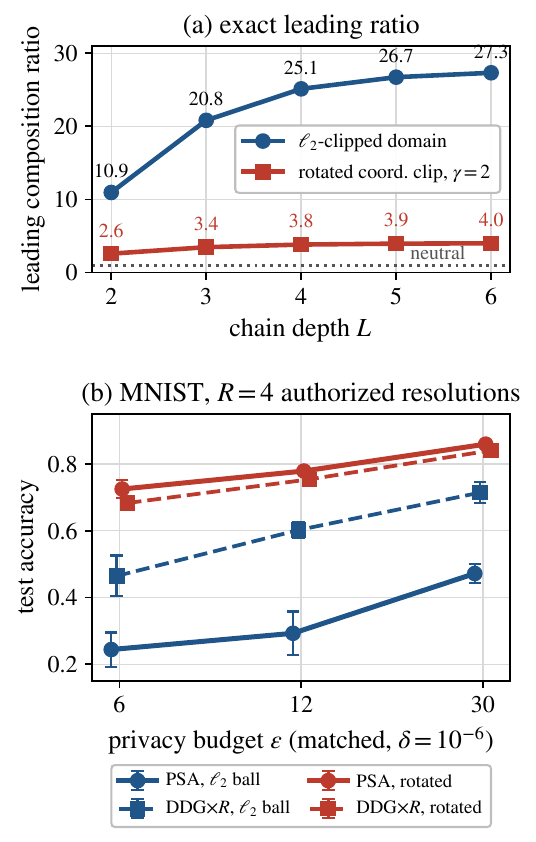}
\vspace{-10pt}
\caption{The operating regime as a design choice, $\gamma=2$. (a) Exact
leading composition ratio of the $\ell_2$-clipped domain against the rotated
coordinatewise-clipped domain, with the surviving $\ell_2$ bound retained in
the sensitivity accounting. (b) Federated accuracy at $R=4$ authorized
resolutions and matched $(\varepsilon,\delta)$, three seeds (bars: one
standard deviation), with the same rotation granted to the repeated flat
baseline; the unrotated PSA arm uses this experiment's conservative stage
bounds and is a within-experiment reference (see text).}
\label{fig:rot}
\end{figure}

The penalty of Fig.~\ref{fig:privacy}(b) is a property of the operating
regime, and the regime is a protocol choice. Instantiating the public common
rotation of Section~\ref{sec:scheme} in randomized-Hadamard form, the
standard flattening transform also used by~\cite{ddgauss}, and then
clipping each rotated coordinate at $\gamma C/\sqrt{d}$ makes the admissible
set a product box that nearly fills the digit alphabet. Retaining the
surviving $\ell_2$ bound in the sensitivity accounting, the leading
composition ratio at $L=4$ falls from $25.1$ to $3.8$ at $\gamma=2$
(Fig.~\ref{fig:rot}(a)), while the measured coordinate-clipping rate stays
below $10^{-4}$. Fig.~\ref{fig:rot}(b) reports matched-$(\varepsilon,\delta)$
training under this instantiation, with the same transform granted to the
baseline. At $\varepsilon=30$ and $R=4$, rotated PSA attains $0.859$ on MNIST
against $0.714$ for the repeated flat pipeline, and $0.380$ on CIFAR-10
against $0.245$; over the full $(\varepsilon,\text{dataset})$ sweep this
margin exceeds $0.08$ in every cell and every seed. Against the rotated
variant of the baseline ($0.841$ and $0.370$) the paired per-seed differences
are positive on MNIST but small and of mixed sign on CIFAR-10, so we claim
statistical parity rather than superiority: in this regime one PSA upload
serves four resolutions at no accuracy cost relative to matched
functionality. The unrotated PSA arm ($0.472$ at $\varepsilon=30$ on MNIST)
is a within-experiment reference only: this experiment's accountant bounds
each ball-regime stage conservatively, with an effective layered-to-flat
noise-variance ratio near $180$ against the per-layer value $25.1$ of
Fig.~\ref{fig:privacy}(b), so its level understates ball-regime PSA and is
not comparable to the $0.709$ of Table~\ref{tab:matched}, whose accountant
uses the tighter per-layer sensitivities. The $R=1$ consistency check of
Section~\ref{sec:fedeval} holds bitwise in the rotated regime as well, no
carry occurred in any of the $156$ runs, and the payload and composition
advantages of Table~\ref{tab:matched} are unchanged. These runs use an
independent re-implementation of the feature pipeline, with non-private
references $0.949$ (MNIST) and $0.607$ (CIFAR-10), so all comparisons are
within the experiment.
\vspace{-5pt}
\section{Discussion and Limitations}
\label{sec:disc}

PSA separates three guarantees. SecAgg hides individual client uploads; DP
limits what an authorized aggregate reveals; and the release gate controls
when a stage aggregate becomes available. If the gate is compromised, delayed
authorization is lost and the construction reduces to ordinary DP-secure
aggregation of the stage sums. If DP is omitted, staged disclosure remains but
each authorized aggregate receives no DP guarantee. Pairwise SecAgg masks cannot replace the release gate because
they cancel in the aggregate.

The arithmetic-payload cost is tied to authorization granularity. A one-upload
protocol must carry every stage that may later be authorized, and PSA provisions
one overflow margin per such modular sum. Under the i.i.d.\ uniform-digit prior,
Corollary~\ref{cor:price} shows that the
$\tfrac12\log_\rho K$ concentration-order margin is unavoidable within the
layerwise modular interface. This is not a lower bound on jointly encoded,
interactive, or richer interfaces. A deployment can trade release granularity
against arithmetic payload through the number of stages $R$.

The guarantees also have distinct scope restrictions. Exact layered
reconstruction requires Assumption~\ref{ass:compatible}; the canonical integer
chain is a proved instance, whereas self-similarity alone is insufficient. The
smaller concentration margin additionally requires Assumption~\ref{ass:qu};
without it the worst-case rule of Theorem~\ref{thm:margin}(a) applies. The
product-integer discrete-Gaussian RDP bound does not automatically extend to
nonproduct lattices: such an extension requires the corresponding convolution
and smoothing analysis described in Remark~\ref{rem:general}. Finally,
Proposition~\ref{prop:cost} is a leading accounting comparison whose neutral
case depends on the admissible domain.

Production realization of the abstract release gate requires mechanisms not
constructed here, including binding pad recovery to the realized active set,
audited key derivation and domain separation, rollback protection, and
malicious-input validation. These are interface requirements rather than
consequences of the lattice analysis. The multi-stage accuracy loss observed
with independent stage noise under $\ell_2$ clipping is likewise
construction-specific rather than a lower bound: the rotated-domain
instantiation of Section~\ref{sec:rotdomain} removes it empirically at
$R=4$, and correlating noise across progressive releases remains a further
candidate.
\vspace{-4pt}
\section{Conclusion}
\label{sec:concl}

We introduced and formalized progressive-resolution secure aggregation as a one-upload functionality that enables successively finer resolutions of the same federated aggregate to be authorized after the clients have uploaded, without renewed client participation. Under independent zero-mean layer symbols and normalized
aggregate-noise scale $O(\sqrt K)$, its sufficient overflow margin is
$\tfrac12\log_\rho K+O(1)$ radix steps. Under the separate i.i.d.\ uniform-digit
prior, the converse establishes the same necessary order within the layerwise
modular interface; it is not a universal lower bound on other protocol
interfaces. Separate release pads hide unauthorized stage sums, while the
arithmetic-payload cost grows with the number of separately sealed moduli.

The numerical results support these qualified claims. Exact finite-alphabet
calculations exhibit the predicted margin and payload scaling; the $R=1$
federated consistency check makes PSA coincide with the matched flat mechanism;
and the compression-only lattice representation changes accuracy by at most
$0.13$ percentage points in the reported runs. With several authorized
resolutions under $\ell_2$ clipping, independent stage noise lowers
finest-resolution accuracy while the one-upload construction reduces arithmetic
payload relative to repeated flat executions. In the rotated
coordinatewise-clipped regime of Section~\ref{sec:rotdomain}, PSA reaches
statistical parity with the matched rotated baseline while retaining the
one-upload progressive-release functionality.

The exact layered construction is proved for the balanced integer chain and
does not follow from self-similarity alone, as illustrated by the observed
$A_2$ compatibility failures. Extending the discrete-Gaussian privacy constants
to nonproduct lattices, handling malicious inputs, and realizing production key
management for the release gate remain open. The multi-stage accuracy gap
associated with independent stage noise under $\ell_2$ clipping also remains a
limitation of the present construction and motivates the investigation of
alternative noise couplings across progressive releases.
\appendices
\vspace{-5pt}
\section{Proof of Theorem~\ref{thm:margin}}
\label{app:margin}

Fix $\ell$. A point of $\lat\ell$ with norm below
$\rpack{\lat{}^{(\ell)}}=\rho^{m_\ell}p_{\ell+1}$ lies in the Voronoi cell
of the modulus lattice, so by Lemma~\ref{lem:carry} it suffices to bound
$\|\mathbf S_\ell+\mathbf Z_\ell\|$. For (a), every
$\mathbf x_k^\ell\in\mathcal C_\ell$ has norm at most $r_{\ell+1}$, so
$\|\mathbf S_\ell\|\le K r_{\ell+1}$; on $\|\mathbf Z_\ell\|\le u_\ell$
condition \eqref{eq:worstmargin} forces zero carry, the complement having
probability at most $\delta_z/L$. For (b), Assumption~\ref{ass:qu} gives
independent zero-mean summands in $\R^{n_{\mathrm b}}$ bounded by
$r_{\ell+1}$ with total second moment $V_\ell$, and the vector Bernstein
bound of Pinelis \cite{pinelis} gives
\begin{align}
\Prob[\|\mathbf S_\ell\|\ge t]
\le2\exp\!\left(
-\frac{t^2}{2V_\ell+\frac23r_{\ell+1}t}
\right).
\end{align}
The bound has no closed-form inverse, but a sufficient $t$ covers its
variance- and range-dominated regimes by its two terms:
\begin{align}
t=\sqrt{2V_\ell\log\frac{4L}{\delta_w}}
+\frac23r_{\ell+1}\log\frac{4L}{\delta_w},
\end{align}
where $4L$ rather than $2L$ absorbs the leading factor $2$, making
$\Prob[\|\mathbf S_\ell\|\ge t]\le\delta_w/L$; with the noise-tail
condition and the triangle inequality this yields \eqref{eq:conc}. A union
bound gives $\delta_w+\delta_z$ when part (b) is used on all layers; part (a)
uses only the $\delta_z$ budget. The choice \eqref{eq:uint} serves both parts,
following from the sub-Gaussian proxy by a two-sided scalar tail bound at level
$\delta_z/(n_{\mathrm b}L)$ per coordinate and
$\|\mathbf z\|_2\le\sqrt{n_{\mathrm b}}\|\mathbf z\|_\infty$.
\hfill$\blacksquare$
\vspace{-5pt}
\section{Proof of Theorem~\ref{thm:converse}}
\label{app:converse}

Fix $\ell$ and give the decoder, as genie information, the exact sums
$\{S_j:j\ne\ell\}$, which can only increase its success probability. Since
$X=\sum_j\rho^{j-1}S_j$ and every other term is now known, recovering $X$
is equivalent to recovering $S_\ell$. Digits are independent across layers,
so the genie leaves the law of $S_\ell$ unchanged and its only
data-dependent observation remains $t_\ell=S_\ell\bmod M_\ell$; bounding
that single-layer problem bounds the original.

Let $p_K(s)=\Prob[S_\ell=s]$. For any decoder based on the residue, the
optimal success probability is
\begin{align}
\sum_{r\in\Z/M_\ell\Z}
\max_{s:\,s\equiv r\ (\mathrm{mod}\ M_\ell)} p_K(s)
\le M_\ell\sup_s p_K(s).
\end{align}
The balanced digit law is bounded, has variance $\sigma_\rho^2$, and is
aperiodic, its support lying in no proper sublattice of $\Z$, as required
for the lattice local limit theorem at spacing one \cite{gnedenko}. Hence
there are constants $C,K_0$ with $\sup_s p_K(s)\le C/(\sigma_\rho\sqrt K)$ for
$K\ge K_0$. Capping at one proves \eqref{eq:convbound}, and a success
probability of at least $1-\delta$ forces the bound itself to be at least
$1-\delta$; substituting $M_\ell=\rho^{1+m_\ell}$ gives
\eqref{eq:convmargin}. \hfill$\blacksquare$
\vspace{-5pt}
\section{Proof of Theorem~\ref{thm:mse}}
\label{app:mse}

Write \eqref{eq:recon} over the authorized stages, as licensed by
Remark~\ref{rem:schedule}. Every released stage is unwrapped to its true
integer sum, so subtracting \eqref{eq:target} from \eqref{eq:recon} gives
\begin{align}
a(\widetilde{\mathbf s}_{\Delta,\ell_0}-\mathbf s_\Delta)
=\sum_k\mathbf E_k^{(\ell_0)}
+\sum_k\sum_{r:\,\mathcal G_r\ \mathrm{authorized}}\mathbf z_k^{r}.
\end{align}
Cross terms vanish as noted with the theorem, giving \eqref{eq:msedec};
part (a) is the case $\ell_0=1$, where $\mathbf r_k^{(1)}=0$.

For general $\ell_0$,
\begin{align}
\E\|\mathbf e_k-\mathbf r_k^{(\ell_0)}\|^2
=&\ n_{\mathrm b}\sigma^2(\lat1)+T_{\ell_0}\\
&-2\E\langle\mathbf e_k,\mathbf r_k^{(\ell_0)}\rangle.
\end{align}
Cauchy--Schwarz bounds the last term by
$2\sqrt{n_{\mathrm b}\sigma^2(\lat1)T_{\ell_0}}$, proving
\eqref{eq:msebounds}, and the support bound follows from
$\mathbf r_k^{(\ell_0)}\in\vor{\ell_0}$. Under the uniform-residue law
\eqref{eq:resenergy} follows coordinatewise from the variance of the
balanced set of $\rho^{\ell_0-1}$ integers, and since
$\sigma^2(\lat{\ell_0})=\rho^{2(\ell_0-1)}\sigma^2(\lat1)$ both bounds have
geometric order for $\ell_0\ge2$; without it the support bound gives
$O(\rho^{2(\ell_0-1)})$ but no matching lower bound. \hfill$\blacksquare$

\end{document}